\documentclass{article}
\usepackage{iclr2025_conference,times}

% Optional math commands from https://github.com/goodfeli/dlbook_notation.
%%%%% NEW MATH DEFINITIONS %%%%%

\usepackage{amsmath,amsfonts,bm}

% Mark sections of captions for referring to divisions of figures

% Highlight a newly defined term

% Figure reference, lower-case.

% Figure reference, capital. For start of sentence

% Section reference, lower-case.

% Section reference, capital.

% Reference to two sections.

% Reference to three sections.

% Reference to an equation, lower-case.
\def\eqref#1{equation~\ref{#1}}
% Reference to an equation, upper case

% A raw reference to an equation---avoid using if possible

% Reference to a chapter, lower-case.

% Reference to an equation, upper case.

% Reference to a range of chapters

% Reference to an algorithm, lower-case.

% Reference to an algorithm, upper case.

% Reference to a part, lower case

% Reference to a part, upper case

\def\1{\bm{1}}

% Random variables

% rm is already a command, just don't name any random variables m

% Random vectors

% Elements of random vectors

% Random matrices

% Elements of random matrices

% Vectors

% Elements of vectors

% Matrix
\def\mA{{\bm{A}}}
\def\mB{{\bm{B}}}
\def\mC{{\bm{C}}}
\def\mD{{\bm{D}}}

\def\mF{{\bm{F}}}

\def\mI{{\bm{I}}}

\def\mL{{\bm{L}}}

\def\mP{{\bm{P}}}
\def\mQ{{\bm{Q}}}

\def\mLambda{{\bm{\Lambda}}}
\def\mSigma{{\bm{\Sigma}}}

% Tensor
\DeclareMathAlphabet{\mathsfit}{\encodingdefault}{\sfdefault}{m}{sl}
\SetMathAlphabet{\mathsfit}{bold}{\encodingdefault}{\sfdefault}{bx}{n}

% Graph

% Sets

% Don't use a set called E, because this would be the same as our symbol
% for expectation.

% Entries of a matrix

% entries of a tensor
% Same font as tensor, without \bm wrapper

% The true underlying data generating distribution

% The empirical distribution defined by the training set

% The model distribution

% Stochastic autoencoder distributions

 % Laplace distribution

\newcommand{\E}{\mathbb{E}}

\newcommand{\R}{\mathbb{R}}

\newcommand{\Cov}{\mathrm{Cov}}
% Wolfram Mathworld says $L^2$ is for function spaces and $\ell^2$ is for vectors
% But then they seem to use $L^2$ for vectors throughout the site, and so does
% wikipedia.

 % See usage in notation.tex. Chosen to match Daphne's book.

\usepackage[colorlinks]{hyperref}
\usepackage{url}

\usepackage{graphicx}
\usepackage{booktabs}
\usepackage{xcolor}
\usepackage{stackengine}
\usepackage{arydshln}
\usepackage{multicol,multirow}

\usepackage{algorithm}
\usepackage[noend]{algpseudocode}
\usepackage{algorithmicx}

\usepackage{amsthm}

\newtheorem{definition}{Definition}[section]
\newtheorem{theorem}{Theorem}[section]

\newtheorem{lemma}[theorem]{Lemma}
\newtheorem{assumption}{Assumption}

\newcommand{\bfx}{\mathbf{x}}
\newcommand{\bfy}{\mathbf{y}}
\newcommand{\bfz}{\mathbf{z}}

\title{FreqPrior: Improving Video Diffusion Models with Frequency Filtering Gaussian Noise}

% Authors must not appear in the submitted version. They should be hidden
% as long as the \iclrfinalcopy macro remains commented out below.
% Non-anonymous submissions will be rejected without review.

\author{Yunlong Yuan$^{1}$, Yuanfan Guo$^{2}$, Chunwei Wang$^{2}$, Wei Zhang$^{2}$, Hang Xu$^{2}$, Li Zhang$^{1}$\thanks{Corresponding author~\texttt{lizhangfd@fudan.edu.cn}.} \\
$^{1}$School of Data Science, Fudan University\quad $^{2}$Noah's Ark Lab, Huawei \vspace{.5em}\\
\url{https://github.com/fudan-zvg/FreqPrior}
}

% The \author macro works with any number of authors. There are two commands
% used to separate the names and addresses of multiple authors: \And and \AND.
%
% Using \And between authors leaves it to \LaTeX{} to determine where to break
% the lines. Using \AND forces a linebreak at that point. So, if \LaTeX{}
% puts 3 of 4 authors names on the first line, and the last on the second
% line, try using \AND instead of \And before the third author name.

\iclrfinalcopy % Uncomment for camera-ready version, but NOT for submission.
\begin{document}

\maketitle
\begin{abstract}
Text-driven video generation has advanced significantly due to developments in diffusion models.
Beyond the training and sampling phases, recent studies have investigated noise priors of diffusion models, as improved noise priors yield better generation results.
One recent approach employs the Fourier transform to manipulate noise, marking the initial exploration of frequency operations in this context. However, it often generates videos that lack motion dynamics and imaging details.
In this work, we provide a comprehensive theoretical analysis of the variance decay issue present in existing methods, contributing to the loss of details and motion dynamics.
Recognizing the critical impact of noise distribution on generation quality, we introduce FreqPrior, a novel noise initialization strategy that refines noise in the frequency domain. 
Our method features a novel filtering technique designed to address different frequency signals while maintaining the noise prior distribution that closely approximates a standard Gaussian distribution.
Additionally, we propose a partial sampling process by perturbing the latent at an intermediate timestep during finding the noise prior, significantly reducing inference time without compromising quality.
Extensive experiments on VBench demonstrate that our method achieves the highest scores in both quality and semantic assessments, resulting in the best overall total score. These results highlight the superiority of our proposed noise prior.
\end{abstract}

\section{Introduction}
\label{sec:intro}
Benefiting from notable advancements of diffusion models~\citep{jascha2015nonequilibrium,ho2020denoising,song2021scorebased} alongside the expansion of large video datasets~\citep{bain2021frozen, Schuhmann2022laion5b}, text-to-video generation has experienced remarkable progress~\citep{ho2022imagenvideo,chenfei2022nuwa,Blattmann2023align,ge2023PYoCo,guo2023animatediff,uriel2023make-a-video,wang2023modelscope,chen2023videocrafter1}.
In ordinary videos, the content between successive frames often shows high similarity, allowing the video to be considered as a sequence of images with motion information.
Leveraging this characteristic, the architecture of video diffusion models~\citep{Blattmann2023align, wang2023modelscope, wenyi2023cogvideo, guo2023animatediff} commonly incorporates temporal or motion layers into existing image diffusion models.
In addition to model architecture, some studies, inspired by the consistent patterns observed across video frames, investigate the relationships within the initial noise prior.
Consequently, alongside research focusing on the training and sampling phases~\citep{song2021ddim, karras2022edm,lu2022dpmsolver, salimans2022progressive, song2023consistency}, another important line of research in video diffusion models is to explore noise initialization strategies, since improved noise prior can potentially yield better generation results.

Several efforts have been made to explore the noise prior, as the initial noise significantly impacts the generated outcomes~\citep{ge2023PYoCo,qiu2023freenoise,chang2024warp,gu2023reuse,mao2024lottery,wu2023freeinit}.
PYoCo~\citep{ge2023PYoCo} discovers that the noise maps corresponding to different frames, derived from a pre-trained image diffusion model, cluster in t-SNE space~\citep{van2008tsne}, indicating a strong correlation along the temporal dimension.
Based on this observation, it introduces two kinds of noise prior with correlations on the frame dimension. 
However, this change in the noise prior requires massive fine-tuning.
FreeInit~\citep{wu2023freeinit} investigates the low-frequency signal leakage phenomenon in the noise, as also demonstrated in the image domain~\citep{lin2024flaw}, and finds that the denoising process is significantly influenced by the low-frequency components of initial noise.
Leveraging these insights, it uses frequency filtering on the noise prior to enhance the temporal consistency of generated videos. 
However, despite its efforts, the generated videos suffer from excessive smoothness, limited motion dynamics, and a lack of details. 
Moreover, additional iterations are necessary to refine the noise, with a full sampling process conducted in each iteration, making FreeInit~\citep{wu2023freeinit} quite time-consuming.

To address this gap, we conduct a mathematical analysis and provide theoretical justification. Our analysis identifies the variance decay issue existing in FreeInit~\citep{wu2023freeinit}.
As depicted in Figure~\ref{fig:sigma}, we investigate the significance of the distribution of the initial noise for diffusion models.
The impact of the variance on the quality of generated videos is evident.
As $\sigma$ decreases from $1$ to $0.96$, there is a progressive loss of details alongside a reduction in motion dynamics.
The frames generated by FreeInit~\citep{wu2023freeinit} are overly smooth and lack details, as the refined noise deviates from the standard Gaussian distribution, resulting in variance decay.
Therefore, it is critically important for diffusion models that the noise prior follows standard Gaussian distribution.

\begin{figure}[t]
  \centering
  \includegraphics[width=1\linewidth]{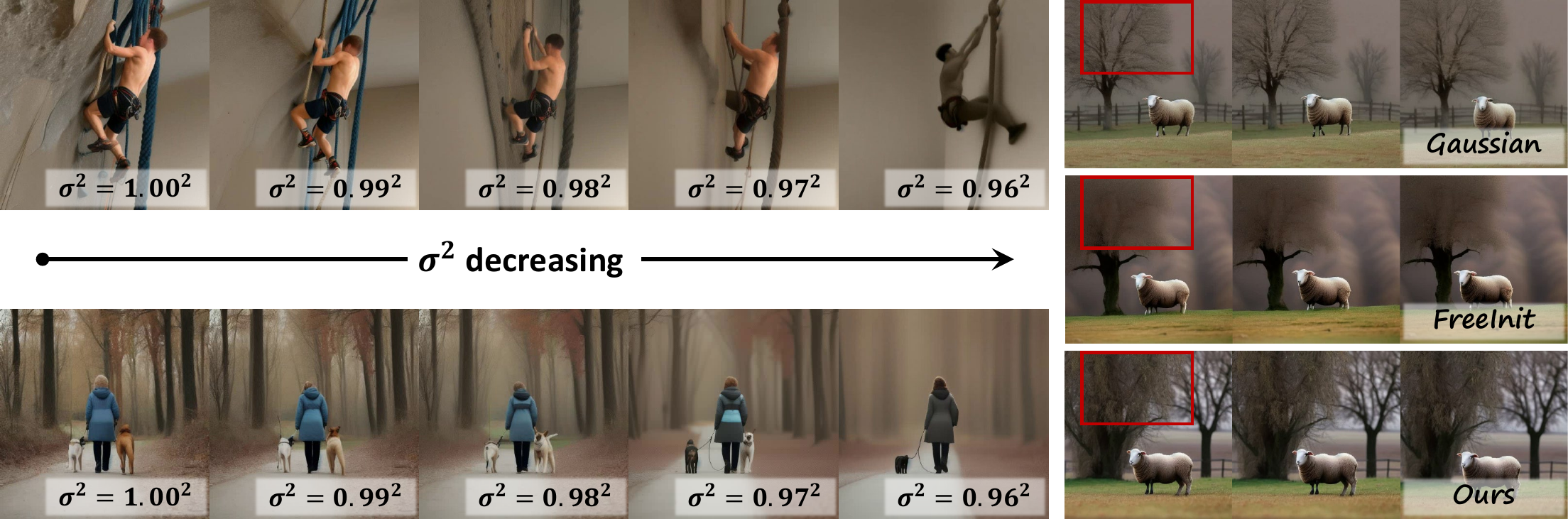}
  \vspace{-10pt}
  \caption{{\bf\em (Left)} {\bf Generated video frames corresponding to Gaussian noise with different variance.} As the variance, denoted as $\sigma^2$, decreases from $1.00^2$ to $0.96^2$, the imaging quality deteriorates and background details gradually lost. 
  {\bf\em (Right)} {\bf Comparisons of our method against the FreeInit and standard Gaussian noise.} The frames generated using FreeInit appear overly smooth and blurred in the area of the highlighted red box, whereas our method preserves rich image details.}
  \vspace{-10pt}
  \label{fig:sigma}
\end{figure}

In this work, we introduce a novel noise prior called \textbf{FreqPrior}.
At the core of our approach is the noise refinement stage, where we propose a novel frequency filtering method designed for noise, which essentially is random variables.
During this stage, we retain the low-frequency signals while enriching high-frequency signals in the frequency domain, 
thereby reducing the covariance error and ensuring that the distribution of our refined noise approximates a standard Gaussian distribution. 
As illustrated in Figure~\ref{fig:sigma}, our method does not suffer from the detail loss issue present in FreeInit~\citep{wu2023freeinit}. 
Additionally, retaining low-frequency signals enhances semantic fidelity. 
Furthermore, to obtain the noise prior, we adjust the diffusion process by perturbing the latent at an intermediate step,  resulting in significant time savings without compromising the quality of the generation results.
We conduct extensive experiments on Vbench~\citep{huang2023vbench}, a comprehensive benchmark, to assess the quality of generated videos. 
The results demonstrate that our method effectively addresses the issue of limited dynamics while improving the overall quality. 
Moreover, our approach outperforms the best on VBench, highlighting the superiority of our method.
Additionally, our method achieves a time-saving of nearly 23\% compared to FreeInit~\citep{wu2023freeinit}. 

In summary, our contributions are as follows:
\textbf{(i):}
We propose a novel frequency filtering method designed to refine the noise, acquiring a better prior, termed {\bf FreqPrior}. We provide a rigorous theoretical analysis of the distribution of our prior. Numerical experiments reveal the covariance error of our method is negligible, implying that our prior closely approximates a Gaussian distribution.
\textbf{(ii):} we propose the partial sampling strategy in our framework, which perturbs the latent at a middle timestep. It can save much time without compromising quality.
\textbf{(iii):} Extensive experiments validate the effectiveness of \textbf{FreqPrior}. 
Specifically, our approach improves both video quality and semantic quality, achieving the highest total score over baselines on VBench~\citep{huang2023vbench}.

\section{Related work}
\label{sec:realted}
\paragraph{Video generative models}
In the field of video generation, previous work has explored a range of methods, including VAEs~\citep{diederik2014vae,Hsieh2018DDPAE,sarthak2020guassianVae}, 
GANs~\citep{goodfellow2014gan,tian2021mocogan-hd,brooks2022dynamicSenes,Skorokhodov2022stylegan-v}, and auto-regressive models~\citep{wu2021godiva,chenfei2022nuwa,Songwei2022longvideo,wenyi2023cogvideo}. 
Recently, diffusion models~\citep{ho2020denoising,song2021scorebased,jascha2015nonequilibrium,dhariwal2021diffusionBeatsGan} have showcased great abilities in image synthesis~\citep{Rombach2022SD,Saharia2022imagen,Alexander2022glide}, and pave the way towards video generation~\citep{ho2022vdmodels,he2022latentVD,voleti2022mcvd}.
Many recent works~\citep{ho2022imagenvideo,Blattmann2023align,ge2023PYoCo,guo2023animatediff,wang2023modelscope,chen2023videocrafter1} on video synthesis are text-to-video diffusion paradigm with text as a highly intuitive and informative instruction.
Both ModelScope~\citep{wang2023modelscope,VideoFusion} and VideoCrafter~\citep{chen2023videocrafter1} are built upon on the UNet~\citep{Olaf2015unet} architecture.
VideoCrafter adds a temporal transformer after a spatial transformer in each block, while in ModelScopoe each block comprises spatial and temporal convolution layers, along with spatial and temporal attention layers.
AnimateDiff~\citep{guo2023animatediff} generates videos by integrating Stable Diffusion~\citep{Rombach2022SD} with motion modules. 

\paragraph{Noise prior for diffusion models}
Given inherent high correlations within video data, several studies~\citep{ge2023PYoCo,qiu2023freenoise,chang2024warp,gu2023reuse,mao2024lottery,wu2023freeinit} have delved into the realm of noise prior within diffusion models.
Both FreeNoise~\citep{qiu2023freenoise} and VidRD~\citep{gu2023reuse} focus on initialization strategies for long video generation, with FreeNoise employing a shuffle strategy to create noise sequences with long-range relationships, while VidRD utilizes the latent feature of the initial video clip.
$\int$-noise prior interprets noise as a continuously integrated noise field rather than discrete pixel values~\citep{chang2024warp}. However, it focuses on low-level features, making it more suitable for tasks such as video restoration and video editing. 
\citet{mao2024lottery} identifies that some pixel blocks of initial noise correspond to certain concepts, enabling semantic-level generation. Nevertheless, collecting these blocks for different concepts is time-consuming, which limits its practical application.
Motivated by correlations in the noise maps corresponding to different frames, PYoCo~\citep{ge2023PYoCo} carefully designs mixed noise prior and progressive noise prior. 
FreeInit~\citep{wu2023freeinit} identifies signal leakage in the low-frequency domain and uses Fourier transform to refine the noise, making the initial exploration of frequency operations in the noise prior.
However, noise is essentially different from signals, making the classic frequency filtering method unsuitable. As a result, 
the generated videos lack motion dynamics and imaging details due to the variance decay issue.
To address these limitations, we propose a novel prior to enhance the overall quality of generated videos.

\section{Method}
\label{sec:method}

\textbf{FreqPrior} comprises three key stages: \textbf{sampling process}, \textbf{diffusion process}, and \textbf{noise refinement}, as shown in Figure~\ref{fig:pipeline}. 
To obtain a new noise prior, our method starts with Gaussian noise, which then goes through these three stages sequentially, 
repeated several times, to result in a refined noise prior. 
Once the new prior is established, it serves as the initial latent for video diffusion models to generate a video.
The {\bf sampling process} in our framework is DDIM sampling~\citep{song2021ddim}.

\begin{figure}
    \centering
    \includegraphics[width=\linewidth]{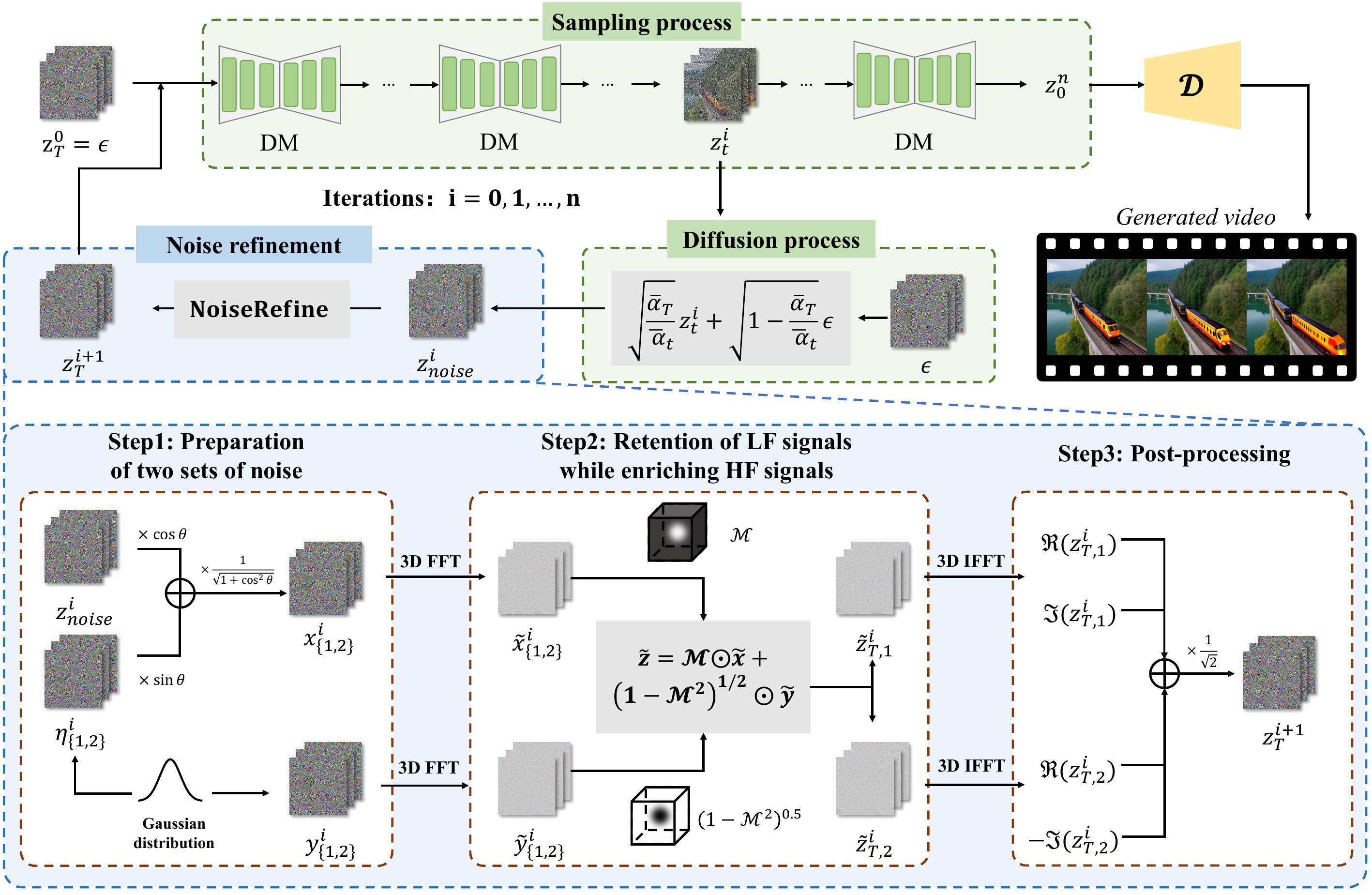}
    \vspace{-5pt}
    \caption{The framework of \textbf{FreqPrior}. It consists of three stages: \textbf{sampling process}, \textbf{diffusion process}, and \textbf{and noise refinement}. 
    In the noise refinement stage, the noise is refined in three steps including \textbf{noise preparation}, \textbf{noise processing}, and \textbf{post-processing}. }
    \label{fig:pipeline}
\end{figure}

\subsection{Diffusion process}
During the sampling process, the latent becomes clean.
Unlike the conventional diffusion process that typically diffuses the clean latent to timestep $T$, our approach perturbs the latent with the initial noise $\epsilon$ once sampling reaches a specific intermediate timestep, denoted as $t$.
The diffusion process can be formulated as follows by leveraging the Markov property:
\begin{equation}
    \bfz_{noise}^i = \sqrt{\frac{\bar{\alpha}_T}{\bar{\alpha}_t}}\bfz_t^i+\sqrt{1-\frac{\bar{\alpha}_T}{\bar{\alpha}_t}}\epsilon,
\end{equation}
where $\{\bar\alpha_j\}_{j=0}^T$ are the notations corresponding to the diffusion scheduler~\citep{ho2020denoising}, and $i$ represents the $i$-th iteration.

The rationale for conducting the diffusion process beforehand stems from the observation that when $t$ reaches about timestep $400$, the latent $\bfz_t^i$ has roughly taken shape and resembles the clean latent $\bfz_0^i$, indicating the latent already has recovered large low-frequency information.
Consequently, this modification yields nearly identical outcomes compared to diffusing a pure clean latent.
This modification offers a notable advantage in terms of efficiency, as it significantly reduces the number of required sampling steps while maintaining consistent results. 
Therefore, we achieve substantial time savings without compromising the fidelity of our results.

\subsection{Noise refinement}
\label{subsec:noise_refinement}
The \textbf{noise refinement} stage focuses on processing different frequency components of the noise to improve video generation quality. 
Low-frequency signals help the model generate videos with better semantics, while high-frequency signals contribute to finer image details.
Unlike conventional filtering methods, which typically target signals like images, our approach processes noise, essentially random variables, distinguishing it from traditional techniques.
Therefore, we propose a novel frequency filtering method designed to effectively handle noise, enhancing overall quality.

\paragraph{Step 1: Preparation of two sets of noise} 
We begin by preparing two distinct sets of noise, each serving a specific purpose: one to convey low-frequency information and the other to provide high-frequency information.
Initially, we independently sample from a standard Gaussian distribution to obtain $\eta_1^i$, $\eta_2^i$, $\bfy_1^i$ and $\bfy_2^i$, where $\bfy_1^i$ and $\bfy_2^i$ correspond to high-frequency information. 
As for low-frequency information,
we combine $\bfz_{noise}^i$ with $\eta_1^i$ and $\eta_2^i$ to yield $\bfx_1^i$ and $\bfx_2^i$ as follows:
\begin{equation}
\label{eq:method_X}
\begin{split}
    \bfx_1^i &= \frac{1}{\sqrt{1+\cos^2\theta}}\left(\cos\theta\cdot \bfz_{noise}^i+\sin\theta\cdot\eta_1^i\right),\quad \eta_1^i\sim \mathcal{N}(\mathbf{0},\mI), \\
    \bfx_2^i &= \frac{1}{\sqrt{1+\cos^2\theta}}\left(\cos\theta\cdot \bfz_{noise}^i+\sin\theta\cdot\eta_2^i\right),\quad \eta_2^i\sim \mathcal{N}(\mathbf{0},\mI).
\end{split}
\end{equation}
Here, ratio $\cos\theta$ controls the proportion of $\bfz_{noise}^i$ contained within $\bfx_1^i$ and $\bfx_2^i$. It adds flexibility to the framework, allowing us to control the amount of low-frequency information derived from $\bfz_{noise}^i$.

\paragraph{Step 2: Retention of low-frequency signals while enriching high-frequency signals}
We apply the Fourier transform to map the noise to the frequency domain:
\begin{equation}
    \Tilde{\bfx}_1^i=\mathcal{F}_{3D}(\bfx_1^i),\quad 
    \Tilde{\bfx}_2^i=\mathcal{F}_{3D}(\bfx_2^i),\quad 
    \Tilde{\bfy}_1^i=\mathcal{F}_{3D}(\bfy_1^i),\quad
    \Tilde{\bfy}_2^i=\mathcal{F}_{3D}(\bfy_2^i),
\end{equation}
where $\mathcal{F}_{3D}$ represents the Fourier transform operation on temporal and spatial dimensions.
We then perform filtering with a low-pass filter $\mathcal{M}$:
\begin{equation}
    \Tilde{\bfz}_1^i=\mathcal{M}\odot\Tilde{\bfx}_1^i + (\bm{1}-\mathcal{M}^2)^{0.5}\odot\Tilde{\bfy}_1^i,
    \quad
    \Tilde{\bfz}_2^i=\mathcal{M}\odot\Tilde{\bfx}_2^i + (\bm{1}-\mathcal{M}^2)^{0.5}\odot\Tilde{\bfy}_2^i.
\end{equation}
Since we are filtering Gaussian variables rather than real image signals, the conventional filtering approach may not be suitable. Typically, a high-pass filter is set to $(\bm{1}-\mathcal{M})$, we use $(\bm{1}-\mathcal{M}^2)^{0.5}$ instead. 
This adjustment is inspired by a fact in probability: if $\mathbf{u},\;\mathbf{v} \sim\mathcal{N}(\mathbf{0}, \mI)$ are independent, then for $m\in\left[0, 1\right]$, it holds that $\mathbf{w} = m \cdot \mathbf{u} + (1-m^2)^{0.5} \cdot \mathbf{v}$
is also standard Gaussian.
In traditional filtering operations, the sum of the low-pass and high-pass filters equals one. However, in our approach, the sum of the squares of the low-pass and high-pass filters equals one.
This modification enriches the high-frequency signals, maintaining the balance between low-frequency and high-frequency components. As a result, it mitigates the loss of details and motion dynamics, leading to higher fidelity in the generated videos.

\paragraph{Step 3: Post-processing} After filtering, the frequency features are mapped back into the latent space, followed by post-processing to form the new noise prior $\bfz_T^{i+1}$. The process is as follows:
\begin{equation}
    \bfz_T^{i+1} = \frac{1}{\sqrt{2}}\left(\Re\left(\bfz_{T,1}^i\right)+\Im\left(\bfz_{T,1}^i\right)+\Re\left(\bfz_{T,2}^i\right)-\Im\left(\bfz_{T,2}^i\right)\right),\quad \bfz_{T, \left\{1, 2\right\}}^i=\mathcal{F}_{3D}^{-1}(\Tilde{\bfz}_{\left\{1, 2\right\}}^i).
\end{equation}
Unlike traditional methods that overlook the imaginary component, our approach recognizes the importance of the information contained within these imaginary parts, which are crucial for preserving the variance in the noise prior.
Consequently, we retain both the real and imaginary components. Specifically, we take both the positive real parts of $\bfz_{T,1}^i$ and $\bfz_{T,2}^i$, but for imaginary components, we take the positive imaginary part of $\bfz_{T,1}^i$ and the negative imaginary part of $\bfz_{T,2}^i$. This is the reason we prepare two sets of noise in \textbf{Step 1}. 
This symmetric formulation enhances the retention of valuable information while effectively eliminating unnecessary and complex terms.

In summary, our framework comprises two phases: 
the first phase focuses on finding a new noise prior, while the second phase generates a video based on that prior.
The process of finding the noise prior includes the sampling process, diffusion process, and noise refinement, as previously discussed. 
Our framework is detailed in Algorithm~\ref{alg:FreqPrior}.

\algnewcommand{\LineComment}[1]{\State \(\triangleright\) #1}
\begin{algorithm}[!t]
   \caption{FreqPrior}
   \label{alg:FreqPrior}
    \begin{algorithmic}[1]
        \Require  
          \Statex $T$: total diffusion step; $t$: middle timestep; $\{\alpha\}_{t=0}^T$: scheduler. $n$: number of iterations.
       \State{Initialize $\bfz_T = \epsilon$, where $\epsilon \sim \mathcal{N}(\mathbf{0},\mI)$}.
       \LineComment{\textbf{\textit{Obtain the noise prior}}}
       \For{$i=0$ {\bfseries to} $n$}
        \State $\bfz_t \gets \operatorname{Sampling}(\bfz_{T})$
        \Comment{Partial sampling process}
        \State $\bfz_{noise}=\sqrt{{\bar{\alpha}_T}/{\bar{\alpha}_t}}\cdot\bfz_t+\sqrt{1-{\bar{\alpha}_T}/{\bar{\alpha}_t}}\cdot\epsilon$
        \Comment{Diffusion Process}
        \State $\bfz_T \gets \operatorname{NoiseRefine}(\bfz_{noise})$
        \Comment{Noise refinement}
       \EndFor
       \LineComment{\textbf{\textit{Generate a video from new noise prior}}}
       \State $\bfz_0 \gets \operatorname{Sampling}(\bfz_T)$   
       \Comment{Sampling process}
       \State $\mathrm{video} \gets \operatorname{Decode}(\bfz_0)$
       \State {\bfseries return} $\mathrm{video}$
    \end{algorithmic}
\end{algorithm}

\subsection{Analysis on the distribution of different noise prior}
\label{subsec:analysis}
For the mixed noise prior proposed in PYoCo~\citep{ge2023PYoCo}, it is constructed as follows:
\begin{equation}
    \bfz_j = \frac{1}{\sqrt{2}}\epsilon_j +  \frac{1}{\sqrt{2}}\epsilon_{share},\quad \epsilon_j,\;\epsilon_{share}\sim\mathcal{N}(\mathbf{0},\mI),
\end{equation}
where $\bfz_j$ and $\epsilon_j$ represent the $j$-th frame of latent $\bfz$ and Gaussian noise $\epsilon$. 
The noise prior $\bfz$ has correlations in the frame dimension, as each frame consists of shared noise $\epsilon_{shared}$:
\begin{equation}
    \mathrm{Cov}(\bfz_i,\bfz_j) = 0.5\mI,\quad i\ne j.
\end{equation}
Therefore, considering only the frame dimension, the diagonal elements of the covariance matrix are 1, and others are 0.5, which deviates standard Gaussian distribution. 
Similarly, the distribution of progressive noise prior also deviates from standard Gaussian distribution.

To conduct a theoretical analysis for FreeInit~\citep{wu2023freeinit} and our method, we first need to determine the distribution of the refined noise. We begin with the following assumption:
\begin{assumption}
\label{assumption:1}
    After the diffusion process, $\bfz_{noise}$ follows a standard Gaussian distribution $\mathcal{N}(0,I)$.
\end{assumption}
We focus on the frame, height, and width dimensions, as other dimensions do not affect analysis.
The noise prior of FreeInit~\citep{wu2023freeinit} has the following distribution (see Appendix~\ref{appendix:freeinit}):
\begin{equation}
    \bfz \sim \mathcal{N}\left(\mathbf{0}, \mP^2+\left(\mI-\mP\right)^2\right), \quad \mP = \frac{1}{N}\left( \mA\mLambda\mA+\mB\mLambda\mB\right),
\end{equation}
where $\bfz\in\R^{fhw}$ is the vector form of the noise prior, $N$ is the length of $\bfz$, $\mLambda$ is the diagonal matrix corresponding to the low-pass filter $\mathcal{M}$, and $\mA$ and $\mB$ represent real and imaginary parts of 3D Fourier matrix as illustrated in Appendix~\ref{appendix:fourier}.

Similarly, the distribution of our method is as follows (see Appendix~\ref{appendix:freqprior}):
\begin{equation}
    \bfz\sim\mathcal{N}\left(\mathbf{0}, \mI - \frac{2\cos^2\theta}{1+\cos^2\theta}\mQ^2\right), \quad \mQ = \frac{1}{N}\left(\mA\mLambda\mB+\mB\mLambda\mA\right).
\end{equation}
To measure the deviations of two Gaussian distributions, we introduce the concept of covariance error.
\begin{definition}[Covariance error]
\label{definition:1}
    For two Gaussian variables with the same expectations, $\mathcal{N}(\mu,\mSigma_1)$ and $\mathcal{N}(\mu,\mSigma_2)$, the covariance error is defined as the Frobenius norm of the difference between their covariance matrices: $||\mSigma_1-\mSigma_2||_F$.
\end{definition}

Under the condition of the same low-pass filter $\mathcal{M}$, we can derive the relationship of the covariance error of FreeInit and our method by using Equation~\ref{eq:covariance_error_two} and Theorem~\ref{theorem:4}:
\begin{equation}
    ||\mI-\mathbf{\Sigma}_{FreqPrior}||_F 
    \le \frac{\cos^2\theta}{1+\cos^2\theta}||\mI-\mathbf{\Sigma}_{FreeInit}||_F. 
\end{equation}
This inequality indicates that $1- \frac{||\mI-\mathbf{\Sigma}_{FreqPrior}||_F}{||\mI-\mathbf{\Sigma}_{FreeInit}||_F}\ge \frac{1}{1+\cos^2\theta}\ge 50\%$.
This demonstrates that the refined noise produced by our method is closer to a standard Gaussian distribution. 
Our approach can theoretically reduce the covariance error by at least 50\% compared to FreeInit~\citep{wu2023freeinit}. 
To further investigate the covariance error, we conduct numerical experiments with three different shapes and two types of low-pass filters: the Butterworth filter and the Gaussian filter. All computations are performed with $\mathrm{float64}$ precision.

\begin{table}[h]
    \centering
    \caption{\textbf{Numerical experiments on covariance error.} We report the covariance errors for three types of prior under various settings, including three different latent shapes and two different filters. The mixed noise prior is independent of filters.}
    \resizebox{\linewidth}{!}{
    \begin{tabular}{ccccccc}
    \toprule
    \multirow{2}{*}{Prior}  &  \multicolumn{2}{c}{(16, 20, 20)} & \multicolumn{2}{c}{(16, 30, 30)} & \multicolumn{2}{c}{(16, 40, 40)}\\
    \cmidrule[0.25pt](lr){2-3} \cmidrule[0.25pt](lr){4-5} \cmidrule[0.25pt](lr){6-7} 
     & {Butterworth} & Gaussian & {Butterworth} & Gaussian & {Butterworth} & Gaussian \\  
    \midrule
    Mixed          &   \multicolumn{2}{c}{$154.9193$}   & \multicolumn{2}{c}{$232.3790$} & \multicolumn{2}{c}{$309.8387$} \\
    FreeInit       &   $3.8230$    & $8.5878$             & $5.7001$    & $12.8817$             & $7.6026$      & $17.1756$ \\          
    Ours           &   $8.5071\times 10^{-28}$  & $7.7218\times 10^{-28}$  & $1.4002\times 10^{-26}$  & $1.2656\times 10^{-26}$  & $2.7342\times 10^{-26}$  & $2.4140\times 10^{-26}$ \\
    \bottomrule
    \end{tabular}
    }
    \label{tab:numerical}
\end{table}

As illustrated in Table~\ref{tab:numerical}, our proposed noise prior exhibits the lowest covariance errors, which are minimal and can be considered negligible. FreeInit shows some covariance errors, indicating the presence of a variance decay issue. The covariance errors for the mixed noise prior are significantly higher, suggesting that it deviates substantially from a standard Gaussian distribution. These numerical experiments imply that our noise prior can be regarded as a standard Gaussian distribution.

\makeatletter
\def\adl@drawiv#1#2#3{%
        \hskip.5\tabcolsep
        \xleaders#3{#2.5\@tempdimb #1{1}#2.5\@tempdimb}%
                #2\z@ plus1fil minus1fil\relax
        \hskip.5\tabcolsep}
\newcommand{\cdashlinelr}[1]{%
  \noalign{\vskip\aboverulesep
           \global\let\@dashdrawstore\adl@draw
           \global\let\adl@draw\adl@drawiv}
  \cdashline{#1}
  \noalign{\global\let\adl@draw\@dashdrawstore
           \vskip\belowrulesep}}
\makeatother
\begin{table}[t]
    \caption{\textbf{Main results.} 
    For different types of noise prior, we provide the settings for finding the prior and sampling steps for video generation. Evaluation metrics include {\it quality score}, {\it semantic score}, and {\it total score}. Additionally, we report the inference time, which includes the time for finding the noise prior and the time for generation.}
    \label{tab:main}
    \centering
    \resizebox{\linewidth}{!}{
    \begin{tabular}{cccccccc}
        \toprule
        \textbf{Base model}  & \textbf{Noise prior} & \textbf{Prior finding} & \textbf{Generation} & \textbf{Quality} & \textbf{Semantic} & \textbf{Total} & \textbf{Inference time} \\
        \midrule
        % VideoCrafter
        \multirow{6}{*}{VideoCrafter}
        &  Gaussian  &     /                &  25 steps     &     69.50 &     54.92 &     66.58  &  \bf 27.73s \\ 
        &  Mixed     &     /                &  25 steps     &     --    &     --    &      --    &   --    \\
        & Progressive&     /                &  25 steps     &     --    &     --    &      --    &   --    \\ \cdashlinelr{2-8}
        &  Gaussian  &     /                &  3*25 steps   &     69.75 &     58.10 &     67.42  &  83.09s \\
        &  FreeInit  &  2 full sampling     &  25 steps     &     70.62 &     58.97 &     68.29  &  83.18s \\
        &  Ours      &  2 partial sampling  &  25 steps     & \bf 70.63 & \bf 61.33 & \bf 68.77  &  \underline{63.67s} \\
        \midrule
        % ModelScope
        \multirow{6}{*}{ModelScope}
        &  Gaussian  &     /                &  50 steps     &     73.13 &     65.69 &     71.64  & \bf 19.24s  \\
        &  Mixed     &     /                &  50 steps     &     --    &     --    &      --    &   --    \\
        & Progressive&     /                &  50 steps     &     --    &     --    &      --    &   --    \\ \cdashlinelr{2-8}
        &  Gaussian  &     /                &  3*50 steps   &     73.25 &     66.31 &     71.87  & 57.72s  \\
        &  FreeInit  &  2 full sampling     &  50 steps     &     73.61 &     67.24 &     72.34  & 57.73s  \\
        &  Ours      &  2 partial sampling  &  50 steps     & \bf 74.04 & \bf 69.06 & \bf 73.04  & \underline{44.88s}  \\
        \midrule
        % AnimateDiff
        \multirow{6}{*}{AnimateDiff}
        &  Gaussian  &     /                &  25 steps     &     79.56 &     69.03 &     77.45  & \bf 23.34s  \\
        &  Mixed     &     /                &  25 steps     &     --    &     --    &      --    &   --    \\
        & Progressive&     /                &  25 steps     &     --    &     --    &      --    &   --    \\ \cdashlinelr{2-8}
        &  Gaussian  &     /                &  3*25 steps   &     79.49 &     69.71 &     77.54  & 70.22s  \\
        &  FreeInit  &  2 full sampling     &  25 steps     &     79.58 &     68.85 &     77.43  & 70.45s  \\
        &  Ours      &  2 partial sampling  &  25 steps     & \bf 80.05 & \bf 70.37 & \bf 78.11  & \underline{54.05s}  \\
        \bottomrule
    \end{tabular}
    }
\end{table}

\section{Experiments}
\label{sec:experiments}

\subsection{Experimental settings}
\paragraph{Baselines}
In our experiments, we establish the following baselines: Gaussian noise, mixed noise, progressive noise, and FreeInit~\citep{wu2023freeinit}.
Gaussian noise serves as the default prior for diffusion models.
The mixed noise prior and progressive noise prior are proposed by PYoCo~\citep{ge2023PYoCo}.
FreeInit is the pioneering work that employs Fourier transform to create a new prior.

\paragraph{Implementations}
We conduct the experiments on three open-soruce text-to-video diffusion models: VideoCrafter~\citep{chen2023videocrafter1}, ModelScope~\citep{wang2023modelscope}, and AnimateDiff~\citep{guo2023animatediff}. 
DDIM~\citep{song2021ddim} is set to the default sampler, with the scheduler's offset configured to 1.
Both FreeInit and our method require additional samplings to acquire the noise prior, with the number of extra sampling iterations set to 2. 
To ensure fairness, we use a Butterworth Filter with a normalized spatial-temporal cutoff frequency of 0.25 as the low-pass filter for both FreeInit and our method.
In our approach, the timestep $t$ is set to 321, the ratio $\cos\theta$ is set to 0.8 for ModelScope and AnimateDiff, and 0.7 for VideoCrafter. All experiments are conducted on NVIDIA V100 GPUs.
For more details, please refer to Appendix~\ref{appendix:setting}.

\paragraph{Evaluation}
To evaluate the performance of each noise prior, we use VBench~\citep{huang2023vbench}, a comprehensive benchmark that closely aligns with human perception.
VBench dissects evaluation into specific, hierarchical, and disentangled dimensions, each featuring tailored prompts and evaluation methods.
Specifically, VBench assesses performance across two primary levels: {\bf \textit{quality score}} and {\bf \textit{semantic score}}. The {\bf \textit{total score}} is calculated as the weighted average of the {\bf \textit{quality score}} and {\bf \textit{semantic score}}.
The scores range from 0 to 100, with a higher score indicating better performance in the corresponding aspects. 
For each noise prior, we generate 4730 videos for VBench evaluation. For more details, please refer to Appendix~\ref{appendix:vbench}.

\begin{figure}[t]
  \centering
  \includegraphics[width=1.0\linewidth]{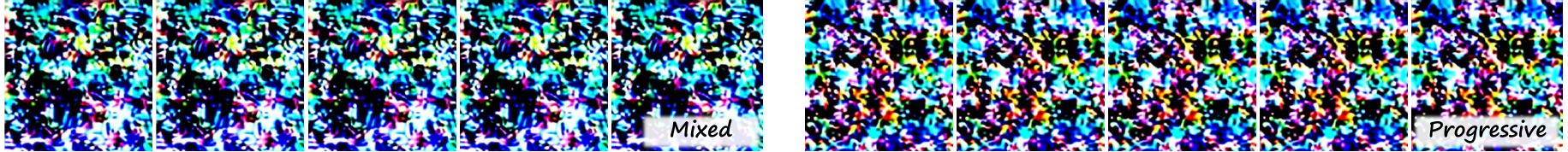}
  \vspace{-15pt}
  \caption{\textbf{Generation results using PYoCo prior.} Both mixed noise prior and progressive noise prior lead to crashes on pretrained video diffusion models.}
  \label{fig:pyoco}
\end{figure}

\subsection{Main results}
\paragraph{Quantitative results} 
As shown in Table~\ref{tab:main}, our method achieves the highest scores across all metrics, {\it quality score}, {\it semantic score}, and {\it total score}, on the three different base models, underscoring the superiority of our proposed noise prior.
Our approach enhances both the video fidelity and semantic consistency of the generated videos.
In contrast, the mixed noise prior and progressive noise prior lead to crashes and failure in generating normal videos, as illustrated in Figure~\ref{fig:pyoco}. 
This is due to the significant gap between these types of prior and the standard Gaussian distribution, as these types of prior introduce correlations in the frame dimension.
The PYoCo method requires training a model specifically on these types of prior and cannot be directly applied to pre-trained diffusion models, which limits its practical applications.
FreeInit and our method require two additional samplings to acquire the noise prior, resulting in increased inference time. To investigate whether the performance improvements are attributed to more denoising steps, we triple the steps during generation for Gaussian noise.
While tripling the steps for Gaussian noise provides a slight performance boost, the improvements are modest, particularly on ModelScope and AnimateDiff, where the \textit{total score} increases by only 0.23 and 0.09, respectively. 
Although it shows a more significant improvement of 0.84 on VideoCrafter, its \textit{total score} still falls well short of both FreeInit and our proposed prior.
FreeInit generally enhances performance compared to Gaussian noise prior, it reduces the \textit{semantic score} and \textit{total score} on AnimateDiff. The reason may be the negative effects of variance decay surpass the positive effects of refinement on low-frequency signals. Our method does not have such a variance decay issue.
Overall, when compared to Gaussian noise with triple steps and FreeInit, our method outperforms all metrics while requiring the least inference time, saving approximately 23\%.
The performance gains stem from the noise refinement stage, where we introduce a new frequency filtering method targeted at the noise. 
The time savings arise from diffusing the latent at an intermediate step, resulting in partial sampling that reduces several denoising steps. 

\paragraph{Qualitative results} Figure~\ref{fig:visual} presents a comparative visualization of the results.
In the top left case, our method produces video frames with superior fidelity, featuring backgrounds reminiscent of a café, while the frames generated using Gaussian noise lack any background. FreeInit further deteriorates the result compared to Gaussian noise, blurring the area within the red box into an indistinct speck. The top right case demonstrates that the videos generated by our method exhibit finer details and better semantics.
In the middle left case, our results are aesthetically superior in terms of color and brightness, while those produced by Gaussian noise appear relatively dim. The middle right case highlights that both baselines fail to generate a guitar, whereas our method successfully creates one that aligns closely with the provided text prompt.
In the bottom left case, the example generated by FreeInit resembles “a cat sleeping in a bowl” rather than “a cat eating food out of a bowl.” In the bottom right case, the video generated from Gaussian noise is missing a “person,” while FreeInit produces an unnatural representation, lacking motion dynamics. In contrast, our method delivers the highest quality video, featuring a person walking forward.
Overall, these cases illustrate that our method outperforms these types of noise prior in both quality and semantics.

\begin{table}[h]
    \caption{{\bf Ablation study on the impact of ratio \boldmath$\cos\theta$.} We present {\bf total score} across various values of ratio $\cos\theta$. To eliminate the effects of 
    timestep $t$, it is fixed to 0.}
    \label{tab:ratio}
    \centering
    % \resizebox{\linewidth}{!}{
    \begin{tabular}{cccc}
        \toprule
        \boldmath$\cos\theta$  & \textbf{VideoCrafter} & \textbf{ModelScope} & \textbf{AnimateDiff} \\
        \midrule      
        1.0  &     69.02 &     72.82 &     78.07  \\ 
        0.9  &     68.96 &     72.92 &     78.07  \\
        0.8  &     68.91 & \bf 73.12 & \bf 78.12  \\
        0.7  & \bf 69.04 &     72.97 &     78.09  \\
        \bottomrule
    \end{tabular}
    % }
\end{table}

\begin{figure}[p]
  \centering
  \includegraphics[width=1.0\linewidth]{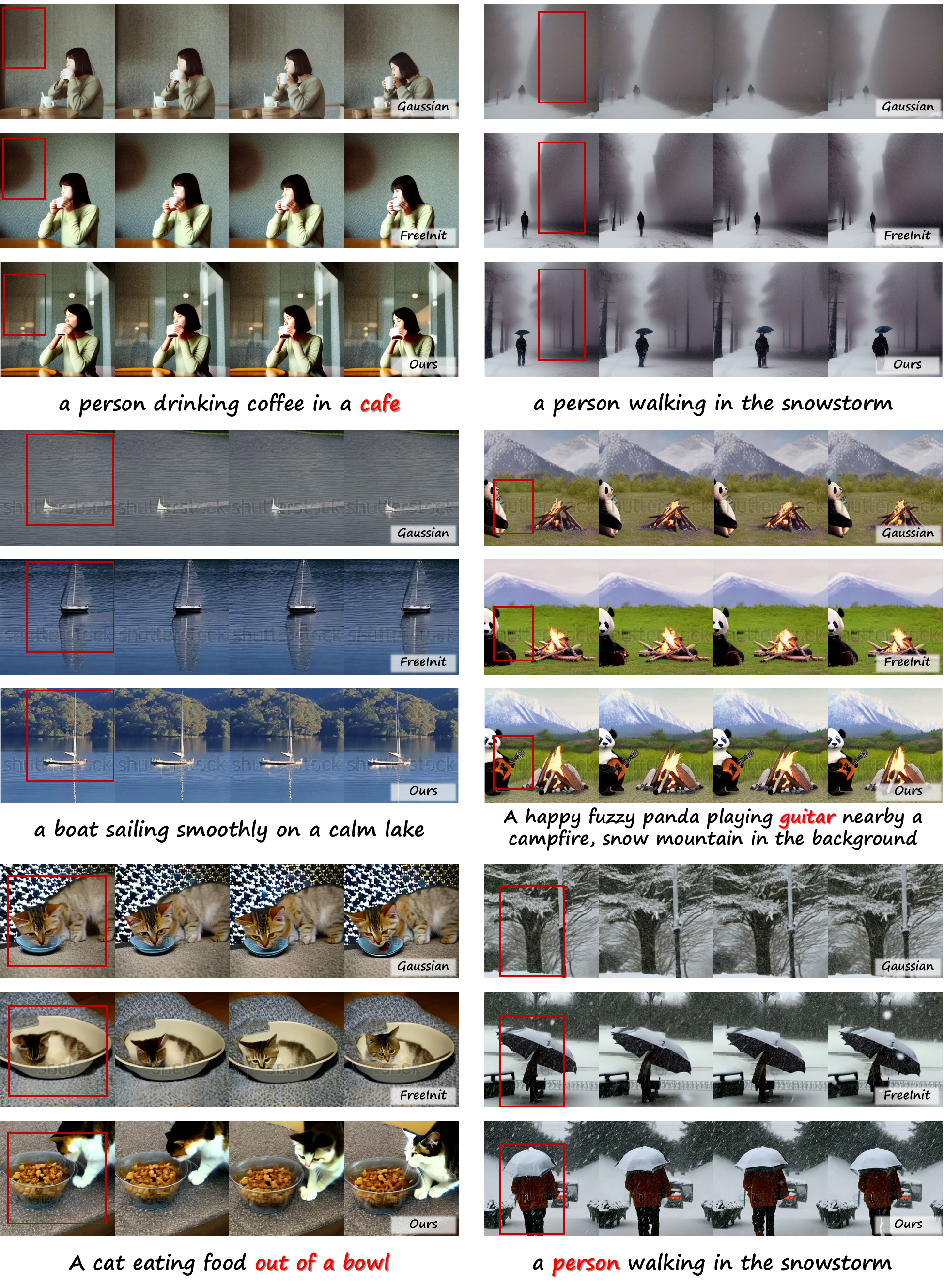}
  \caption{\textbf{Qualitative results and comparisons.} The cases in the top row are generated using AnimateDiff, while the middle row displays cases from ModelScope, and the bottom row shows cases generated by VideoCrafter. For each case, we present the generation results from different types of noise prior along with the corresponding prompt.}
  \label{fig:visual}
\end{figure}

\subsection{Ablation study}
\begin{figure}[h]
  \centering
  \includegraphics[width=1.0\linewidth]{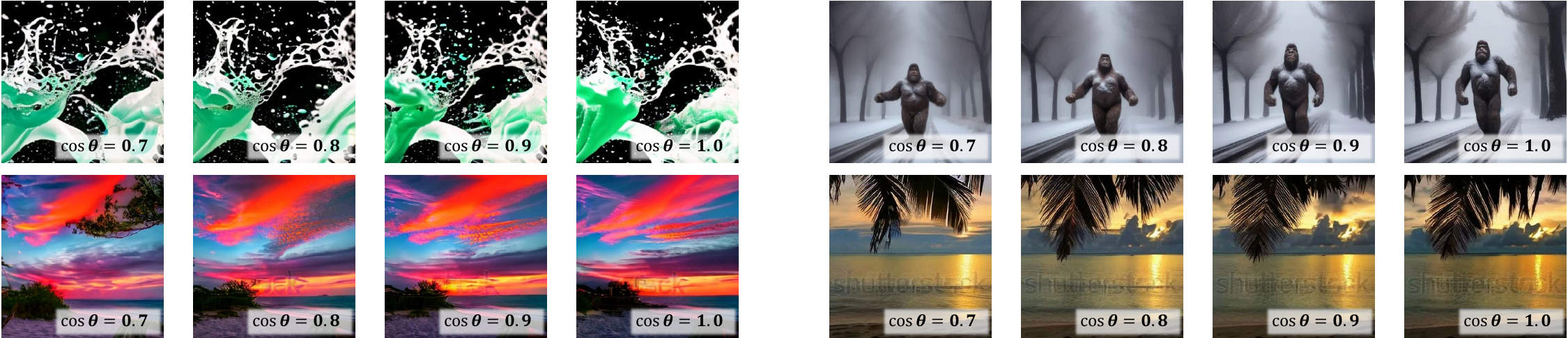}
  \vspace{-15pt}
  \caption{\textbf{Generation results on different values of \boldmath$\cos\theta$.} Though there are some changes in the generated video frames as $\cos\theta$ varies, they are quite similar.}
  \label{fig:ratio}
  \vspace{-5pt}
\end{figure}

\paragraph{Influence of ratio \boldmath$\cos\theta$} 
The generation results are affected by two hyper-parameters, the ratio $\cos\theta$ in the noise refinement stage and the timestep $t$ in the diffusion process.
To investigate the effects of $\cos\theta$, we set timestep $t$ to $0$ to eliminate the influence of $t$.
In Equation~(\ref{eq:method_X}), both $\bfx_1^i$ and $\bfx_2^i$ contribute to low-frequency signals of $\bfz_T^{i+1}$, the initial latent for the subsequent iteration.
As $\cos\theta$ decreases, the proportion of $\bfz_{noise}^i$ in $\bfx_1^i$ and $\bfx_2^i$ diminishes, indicating a reduction in the low-frequency components rooted in $\bfz_{noise}^i$. 
$\cos\theta$ governs the extent to which low-frequency signals are retained, assuming the filter remains constant. Therefore, $\cos\theta$ can not be small.
We conducted experiments with four different values of $\cos\theta$.
% (Table ratio cos)
As shown in Table~\ref{tab:ratio}, for AnimateDiff~\citep{guo2023animatediff} and ModelScope~\citep{wang2023modelscope}, \textit{Total Score} initially increases, reaching its peak at $\cos\theta=0.8$, before declining. 
For VideoCrafter~\citep{chen2023videocrafter1}, \textit{Total Score} get the highest at $\cos\theta=0.7$.
Overall, the differences among different $\cos\theta$ values are minor, indicating the FreqPrior is robust and not sensitive to changes in $\cos\theta$.
The visualization results presented in Figure~\ref{fig:ratio} demonstrate that while varying $\cos\theta$ leads to some differences in the video frames, they are still quite similar.

% Visualization results are listed in Appendix~\ref{appendix:visualization}.

% timestep
\begin{figure}[hbt]
  \centering
  % \fbox{\rule[-.5cm]{0cm}{4cm} \rule[-.5cm]{4cm}{0cm}}
  \includegraphics[width=1.0\linewidth]{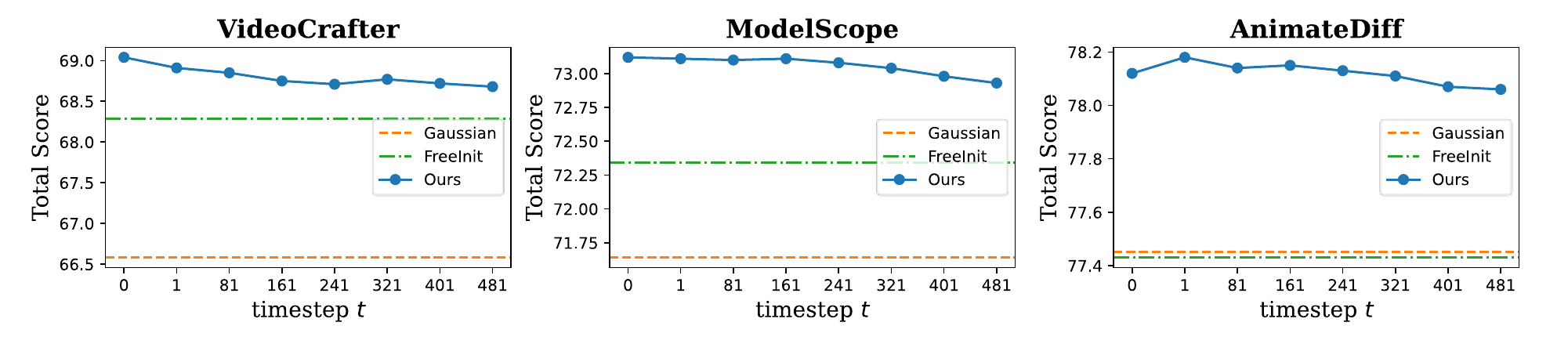}
  \vspace{-15pt}
  \caption{\textbf{Ablation study on the impact of timestep \boldmath$t$.} \textit{Total Score} is assessed accross different diffusion timesteps $t$ for three distinct text-to-video diffusion models. Overall, the timestep $t$ has little effect on the evaluated metric.}
  \label{fig:timestep}
  \vspace{-5pt}
\end{figure}

\paragraph{Influence of timestep \boldmath$t$.}
Figure~\ref{fig:timestep} shows that our method consistently outperforms Gaussian noise prior and FreeInit~\citep{wu2023freeinit}  across varying timestep $t$.
While there are some fluctuations, the curve corresponding to our method on all three base models shows a slow declining trend, indicating that as the timestep $t$ increases, the quality of generated videos is likely to decrease. However, with a larger timestep, fewer denoising steps are required in each sampling iteration to find the noise prior.
Consequently, it presents a trade-off between the generation quality and the inference time. Considering both factors, we selected $t=321$ for the final setting.

\section{Conclusion}
\label{sec:conclusion}
We introduce a new noise prior for text-to-video diffusion models, named \textbf{FreqPrior}. 
The key stage in our framework lies in noise refinement, where we propose a novel frequency filtering method specifically designed for Gaussian noise.
By refining the noise, we obtain a better prior for video diffusion models, thereby enhancing the quality of generation results.
Although other types of noise prior have been proposed to improve the performance of video diffusion models, their distributions often deviate from a standard Gaussian distribution, leading to sub-optimal generation outcomes.
In practice, the covariance error between our prior and Gaussian noise approaches zero, indicating that our noise prior closely approximates a Gaussian distribution. 
Extensive experiments demonstrate the superiority of our method over existing noise prior.

\section*{Acknowledgment}
This work was supported in part by National Natural Science Foundation of China (Grant No. 62376060).

% \newpage
\bibliography{iclr2025_conference}
\bibliographystyle{iclr2025_conference}

% \newpage
\appendix
\section{Preliminary}
\subsection{Diffusion models}
\label{appendix:preliminary}
\textbf{Diffusion models}~\citep{ho2020denoising} are a class of generative models that recover the data corrupted by the Gaussian noise through learning a reverse diffusion process.
It iteratively denoises from Gaussian noise, which corresponds to learning the reverse process of a fixed Markov Chain of length $T$.
The diffusion process is a Markov chain that gradually corrupts the data with Gaussian noise.
For the diffusion process given the variance schedule $\beta_t$:
\begin{equation}
q(x_{1:T} | x_0) = \prod_{t=1}^T q(x_t | x_{t-1} ), \qquad q(x_t|x_{t-1}) = \mathcal{N}(x_t;\sqrt{1-\beta_t}x_{t-1},\beta_t I).
\end{equation}
Using the  Markov property, we can sample $x_t$ at an arbitrary time $t$ from $x_0$ in closed form. Let $\alpha=1-\beta_t$ and $\bar{\alpha}_t=\prod_{s=1}^t\alpha_s$, we have
\begin{equation}
    q(x_t|x_0) = \mathcal{N}(x_t; \sqrt{\bar\alpha_t}x_0, (1-\bar\alpha_t)I).
\end{equation}
By the Bayes' rules, $q(x_{t-1}|x_t,x_0)$ can be expressed as follows:
\begin{align}
q(x_{t-1}|x_t,x_0) &=  \mathcal{N}(x_{t-1}; \tilde\mu_t(x_t, x_0), \tilde\beta_t I), \\
\text{where}\quad \tilde\mu_t(x_t, x_0) &= \frac{\sqrt{\bar\alpha_{t-1}}\beta_t }{1-\bar\alpha_t}x_0 + \frac{\sqrt{\alpha_t}(1- \bar\alpha_{t-1})}{1-\bar\alpha_t} x_t \quad \text{and} \quad
\tilde\beta_t = \frac{1-\bar\alpha_{t-1}}{1-\bar\alpha_t}\beta_t.
\end{align}
For the reverse process, it generates $x_0$ from $x_T$ with prior $x_T=\mathcal{N}(x_T;0,I)$ and transitions:
\begin{equation}
    p_\Theta(x_{t-1}|x_t)=\mathcal{N}(x_{t-1};\mu_\Theta(x_t, t),\Sigma_\Theta(x_t,t)).
\end{equation}
In the equation, $\Theta$ are learnable parameters of models $\epsilon_\Theta$ 
which are trained to minimize the variant of the variational bound $\E_{x,\epsilon\sim\mathcal{N}(0,I),t}\left[ \left\| \epsilon-\epsilon_{\Theta}\left(x_t, t\right) \right\|^2 \right]$.

\subsection{Fourier transform}
\label{appendix:fourier}
\textbf{Discrete Fourier Transform (DFT)} is one of the most important discrete transforms used in digital signal processing including image processing.
The discrete Fourier transform can be expressed as the \textbf{DFT} matrix, denoted as $\mF$, defined as follows:
\begin{equation}
    \mF = \left( \omega_N^{\left(m-1\right)\cdot\left(n-1\right)} \right)_{N \times N}=
\begin{bmatrix}
 \omega_N^{0 \cdot 0}     & \omega_N^{0 \cdot 1}     & \cdots & \omega_N^{0 \cdot (N-1)}     \\
 \omega_N^{1 \cdot 0}     & \omega_N^{1 \cdot 1}     & \cdots & \omega_N^{1 \cdot (N-1)}     \\
 \vdots                   & \vdots                   & \ddots & \vdots                       \\
 \omega_N^{(N-1) \cdot 0} & \omega_N^{(N-1) \cdot 1} & \cdots & \omega_N^{(N-1) \cdot (N-1)} \\
\end{bmatrix}
\end{equation}
where $\omega_N = e^{-{2\pi i/N}}$ is a primitive $N$-th root of unity. 
The inverse transform, denoted as $\mF^{-1}$ can be derived from $\mF$ as its complex conjugate transpose, scaled by $\frac{1}{N}$: $\mF^{-1} = \frac{1}{N}\mF^\ast$.

The \textbf{DFT} matrix $\mF$ can be decomposed into its real and imaginary parts, represented respectively by matrices $\mA$ and $\mB$:
\begin{equation}
\label{eq:decomposition}
    \mF = \mA + \mB i,\quad \mA = \Re\left(\mF\right),\quad \mB = \Im\left(\mF\right).
\end{equation}
This decomposition simplifies the understanding of the structure and properties of the DFT matrix, providing deeper insights. Using Euler’s formula, $\mA$ and $\mB$ can be explicitly expressed as:
\begin{equation}
    \mA = \left(\cos\left( \left(m-1\right)\left(n-1\right)\theta \right)\right)_{N\times N},\quad 
    \mB = \left(\sin\left( \left(m-1\right)\left(n-1\right)\theta \right)\right)_{N\times N},
\end{equation}
where $\theta=-\frac{2\pi}{N}$. 
Notably, both $\mA$ and $\mB$ are real symmetric matrices.

\begin{lemma}
\label{lemma:1}
    For $\theta=-\frac{2\pi}{N}$ where $N$ is a positive integer, it holds that $\sum_{k=1}^{N}\sin\left(l\left(k-1\right)\theta\right)=0$ for any integer $l$.
\end{lemma}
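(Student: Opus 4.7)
The plan is to recognize the sum of sines as the imaginary part of a geometric sum of complex exponentials and then evaluate it directly. First I would reindex by setting $j = k-1$ so that the sum becomes $S = \sum_{j=0}^{N-1} \sin(l j \theta)$ with $\theta = -2\pi/N$, which is $\Im\!\left(\sum_{j=0}^{N-1} e^{i l j \theta}\right) = \Im\!\left(\sum_{j=0}^{N-1} \zeta^{j}\right)$ where $\zeta = e^{-2\pi i l / N}$.

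Next I would split into two cases according to whether $\zeta = 1$, i.e.\ whether $N \mid l$. In the case $N \mid l$, each term $\sin(l j \theta) = \sin(-2\pi l j / N)$ has argument an integer multiple of $2\pi$, so every summand is $0$ and the conclusion is immediate. In the case $N \nmid l$, I would use the finite geometric series identity
\begin{equation*}
\sum_{j=0}^{N-1} \zeta^{j} \;=\; \frac{1 - \zeta^{N}}{1 - \zeta} \;=\; \frac{1 - e^{-2\pi i l}}{1 - e^{-2\pi i l / N}} \;=\; 0,
\end{equation*}
since $e^{-2\pi i l} = 1$ for any integer $l$ while the denominator is nonzero. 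Taking imaginary parts gives $S = 0$.

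There is essentially no obstacle here: the result is a standard consequence of roots of unity summing to zero. The only minor point worth stating carefully is the case split, since the geometric-series formula requires $\zeta \neq 1$; handling $N \mid l$ separately (where the summand vanishes identically) takes care of that. An equivalent and slightly more elementary alternative, in case one wishes to avoid complex exponentials, is a pairing argument: since $\sin\!\bigl(l(N-j)\theta\bigr) = \sin(-2\pi l - l j \theta) = -\sin(l j \theta)$, the terms indexed by $j$ and $N-j$ cancel, leaving only $j=0$ (and, when $N$ is even, $j=N/2$), both of which contribute $0$. Either route yields the claim in a few lines.
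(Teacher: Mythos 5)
Your proof is correct and takes essentially the same route as the paper: rewrite the sum of sines as the imaginary part of a geometric sum of $N$-th roots of unity, split on whether the ratio equals $1$, and apply the finite geometric series formula in the nontrivial case. The only cosmetic difference is that in the degenerate case the paper notes the geometric sum equals $N$ (whose imaginary part is zero) while you observe each sine term vanishes identically; your optional pairing argument is a nice elementary alternative but is not needed.
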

\begin{proof}
By applying Euler's formula, we rewrite $\sin\left(l\left(k-1\right)\theta\right)$ as $\Im\left(\omega_N^{l(k-1)}\right)$, where $\omega_N=e^{-2\pi i/N}$. Then we have:
\begin{equation}
    \sum_{k=1}^{N} \sin\left(l\left(k-1\right)\theta\right) =\sum_{k=0}^{N-1} \Im \left(\omega_N^{lk}\right) = \Im \left( \sum_{k=0}^{N-1} \omega_N^{lk}\right).
\end{equation}
The term $\sum_{k=0}^{N-1} \omega_N^{lk}$ is the sum of geometric sequence. 
If $\omega_N^{l} = 1$, then $\sum_{k=0}^{N-1} \omega_N^{lk} = N$, yielding $\Im \left( \sum_{k=0}^{N-1} \omega_N^{lk}\right)=0$.

Otherwise, if $\omega_N^{l} \ne 1$, we have $\sum_{k=0}^{N-1} \omega_N^{lk} = \left(1-\omega_N^{lN}\right)/\left(1-\omega_N^{l}\right)$. Since $\omega_N^{N}=1$, then $\sum_{k=0}^{N-1} \omega_N^{lk} = 0$, and consequently $\Im \left( \sum_{k=0}^{N-1} \omega_N^{lk}\right)=0$.

In conclusion, we have shown that $\sum_{k=1}^{N} \sin\left(l\left(k-1\right)\theta\right)=0$ for any integer $l$.
\end{proof}
This lemma offers foundational insights into the behavior of the sum of sinusoidal functions,
Now, we introduce a theorem regarding properties of the \textbf{DFT} matrix.
\begin{theorem}
\label{theorem:1}
    Given a \textbf{DFT} matrix $\mF\in \mathbb{C}^{N\times N}$, with $\mA$ and $\mB$ representing its real and imaginary parts respectively, it holds that $\mA\mB=\mB\mA=\mathbf{0}$ and $\mA^2+\mB^2=N\mI$.
\end{theorem}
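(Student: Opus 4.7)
The plan is to verify all three identities by computing matrix entries directly and reducing them to sums of the form handled by Lemma~\ref{lemma:1}. With $\theta=-2\pi/N$, the entries are $\mA_{mk}=\cos((m-1)(k-1)\theta)$ and $\mB_{mk}=\sin((m-1)(k-1)\theta)$, so each entry of the products $\mA\mB$, $\mB\mA$, $\mA^2$, and $\mB^2$ is a sum of products of sines and cosines, to which product-to-sum identities apply cleanly.

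First I would establish $\mA\mB=\mathbf{0}$. The $(m,n)$-entry is
\begin{equation*}
(\mA\mB)_{mn}=\sum_{k=1}^{N}\cos((m-1)(k-1)\theta)\sin((n-1)(k-1)\theta),
\end{equation*}
and the identity $\cos\alpha\sin\beta=\tfrac{1}{2}[\sin(\alpha+\beta)-\sin(\alpha-\beta)]$ splits this into two sums of the form $\sum_{k=1}^{N}\sin(l(k-1)\theta)$, with $l=m+n-2$ and $l=m-n$ respectively. Both vanish by Lemma~\ref{lemma:1}, so $\mA\mB=\mathbf{0}$. The identical argument with the roles of $\sin$ and $\cos$ exchanged at the outermost level gives $\mB\mA=\mathbf{0}$.

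Next I would prove $\mA^2+\mB^2=N\mI$. Adding the two products entrywise and using the identity $\cos\alpha\cos\beta+\sin\alpha\sin\beta=\cos(\alpha-\beta)$, the $(m,n)$-entry collapses to
\begin{equation*}
(\mA^2+\mB^2)_{mn}=\sum_{k=1}^{N}\cos((m-n)(k-1)\theta).
\end{equation*}
When $m=n$ each term is $1$ and the sum equals $N$. When $m\ne n$, the sum equals $0$ by the cosine analog of Lemma~\ref{lemma:1}, whose proof is word-for-word that of Lemma~\ref{lemma:1} with $\Im$ replaced by $\Re$, relying on $\omega_N^{m-n}\ne 1$ since $|m-n|\in\{1,\dots,N-1\}$.

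There is no genuine obstacle here beyond careful book-keeping with product-to-sum identities; the arithmetic simply has to be organized so that every resulting term is one of the sums Lemma~\ref{lemma:1} kills. If one wishes to avoid introducing the cosine companion of Lemma~\ref{lemma:1} explicitly, an equivalent route is to combine the just-established $\mA\mB=\mB\mA=\mathbf{0}$ with the unitarity relation $\mF\mF^{\ast}=N\mI$: expanding $(\mA+\mB i)(\mA-\mB i)=\mA^2+\mB^2+i(\mB\mA-\mA\mB)$ and matching real parts then yields $\mA^2+\mB^2=N\mI$ immediately.
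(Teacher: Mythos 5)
Your proof is correct, but it follows a more computational route than the paper's. The paper first exploits the Fourier-inverse identity $\mF^{-1}=\frac{1}{N}\mF^{\ast}$: writing $\mI=\frac{1}{N}(\mA+\mB i)(\mA-\mB i)$ and separating real and imaginary parts yields $\mA^2+\mB^2=N\mI$ and $\mA\mB=\mB\mA$ simultaneously, after which only the single entrywise computation showing $\mA\mB=\mathbf{0}$ (via Lemma~\ref{lemma:1}) remains. You instead verify all three identities by direct entry-level calculation with product-to-sum formulas, which forces you to introduce and justify a cosine companion to Lemma~\ref{lemma:1}. Your route is self-contained and slightly more elementary, trading one appeal to unitarity for one extra auxiliary fact; the paper's route is shorter once the inverse-transform relation is taken as known. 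Two small points you handle correctly that are worth flagging: the cosine companion of Lemma~\ref{lemma:1} does \emph{not} vanish unconditionally (it equals $N$ when $\omega_N^{l}=1$), and you rightly restrict to the off-diagonal case where $|m-n|\in\{1,\dots,N-1\}$ guarantees $\omega_N^{m-n}\neq 1$; and you correctly identify the unitarity shortcut in your final remark, which is exactly the paper's proof.
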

\begin{proof}
Using the property of the inverse Fourier transform, we have
\begin{equation}
    \mI = \mF\mF^{-1} = \frac{1}{N} \mF \mF^{\ast} = \frac{1}{N}\left(\mA + \mB i\right)\left(\mA - \mB i\right) = \frac{1}{N}\left(\mA^2+\mB^2-\mA\mB i+\mB\mA i\right).
\end{equation}
Comparing real parts and imaginary parts of both sides, we derive:
\begin{equation}
    \mA^2+\mB^2=N\mI,\quad \mB\mA=\mA\mB.
\end{equation}
Considering the matrix $\mA\mB$, we calculate the value of the element in the $m$-th row and $n$-th column:
\begin{equation}
\begin{split}
    \left(\mA\mB\right)_{mn} &=\sum_{k=1}^{N}\cos\left( \left(m-1\right)\left(k-1\right)\theta \right)\sin\left( \left(k-1\right)\left(n-1\right)\theta \right) \\
    &=\frac{1}{2}\sum_{k=1}^{N}\left( \sin\left((m+n-2)(k-1)\theta\right) - \sin\left((m-n)(k-1)\theta\right) \right) = 0.
\end{split}
\end{equation}
The last equation holds using Lemma~\ref{lemma:1}. The equation holds for each element of $\mA\mB$.
Therefore $\mA\mB=\mB\mA=\mathbf{0}$.
\end{proof}
%%%%%%%%%%%%%   3 dimension
For the 3D Fourier transform, it can be represented as follows using the Kronecker product:
\begin{equation}
    \mF_{3D} = \mF_{T} \otimes \mF_{H} \otimes \mF_{W}.
\end{equation}
The inverse transform is given by:
\begin{equation}
    \mF_{3D}^{-1}  = \frac{1}{N_TN_HN_W} \mF_{3D}^\ast.
\end{equation}
Similarly, we decompose $\mF_{3D}$ into its real part $\mA_{3D}$ and imaginary part $\mB_{3D}$:
\begin{equation}
\begin{split}
    \mF_{3D} & =\left(\mA_T+\mB_Ti\right) \otimes \left(\mA_H+\mB_Hi\right) \otimes \left(\mA_W+\mB_Wi\right), \\
    \mA_{3D} &= \mA_T\otimes\mA_H\otimes\mA_W
    -\mA_T\otimes\mB_H\otimes\mB_W
    -\mB_T\otimes\mA_H\otimes\mB_W
    -\mB_T\otimes\mB_H\otimes\mA_W, \\
    \mB_{3D} &= \mA_T\otimes\mA_H\otimes\mB_W
    +\mA_T\otimes\mB_H\otimes\mA_W
    +\mB_T\otimes\mA_H\otimes\mA_W
    -\mB_T\otimes\mB_H\otimes\mB_W.
\end{split}
\end{equation}
By Theorem~\ref{theorem:1} and the property or Kronecker product, it still holds that:
\begin{equation}
    \mA_{3D}^2+\mB_{3D}^2=N_T N_H N_W\mI,\quad \mB_{3D}\mA_{3D}=\mA_{3D}\mB_{3D}=\mathbf{0}.
\end{equation}
It reveals that the 3D DFT matrix shares the same properties as the ordinary DFT matrix.
For convenience, we denote the DFT matrix, including multi-dimensional cases as $\mF$, with size denoted as $N$.
Employing mathematical induction, we can extend Theorem~\ref{theorem:1} from one-dimensional case to arbitrary finite dimensions:
\begin{theorem}
\label{theorem:dft}
    Given a \textbf{DFT} matrix or multi-dimension \textbf{DFT} matrix $\mF\in \mathbb{C}^{N\times N}$, with $\mA$ and $\mB$ are its real part and imaginary part respectively, it holds that $\mA\mB=\mB\mA=\mathbf{0}$ and $\mA^2+\mB^2=N\mI$.
\end{theorem}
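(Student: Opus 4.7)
The plan is to prove Theorem~\ref{theorem:dft} by induction on the dimension $d$, using the Kronecker-product factorization of the multi-dimensional DFT matrix together with the one-dimensional case already established in Theorem~\ref{theorem:1}.

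First I would set up the induction. The base case $d = 1$ is exactly Theorem~\ref{theorem:1}. For the inductive step, suppose the claim holds for some dimension $d$, so that writing $\mF_d = \mA_d + \mB_d i$ with sizes $N_d$, we have $\mA_d \mB_d = \mB_d \mA_d = \mathbf{0}$ and $\mA_d^2 + \mB_d^2 = N_d \mI$. Let $\mF = \mA + \mB i$ be a one-dimensional DFT of size $N$ (the next factor added), so that $\mF_{d+1} = \mF_d \otimes \mF$ has size $N_{d+1} = N_d N$. Expanding,
\begin{equation}
\mF_{d+1} = (\mA_d + \mB_d i) \otimes (\mA + \mB i) = (\mA_d \otimes \mA - \mB_d \otimes \mB) + (\mA_d \otimes \mB + \mB_d \otimes \mA)\, i,
\end{equation}
which identifies $\mA_{d+1} = \mA_d \otimes \mA - \mB_d \otimes \mB$ and $\mB_{d+1} = \mA_d \otimes \mB + \mB_d \otimes \mA$.

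Next I would verify the two required identities using the mixed-product rule $(\mX \otimes \mY)(\mZ \otimes \mW) = (\mX\mZ) \otimes (\mY\mW)$. For the product $\mA_{d+1}\mB_{d+1}$, every resulting tensor term contains either $\mA_d \mB_d$, $\mB_d \mA_d$, $\mA\mB$, or $\mB\mA$ as a factor, and each of these vanishes by the inductive hypothesis or the base case; the same cancellation gives $\mB_{d+1}\mA_{d+1} = \mathbf{0}$. For the sum $\mA_{d+1}^2 + \mB_{d+1}^2$, the cross terms again involve $\mA_d \mB_d$ or $\mA\mB$ and drop out, leaving
\begin{equation}
\mA_{d+1}^2 + \mB_{d+1}^2 = \mA_d^2 \otimes \mA^2 + \mB_d^2 \otimes \mB^2 + \mA_d^2 \otimes \mB^2 + \mB_d^2 \otimes \mA^2 = (\mA_d^2 + \mB_d^2) \otimes (\mA^2 + \mB^2),
\end{equation}
which by the inductive hypothesis and base case equals $(N_d \mI) \otimes (N \mI) = N_{d+1} \mI$.

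I do not expect a serious obstacle here: the argument is essentially bookkeeping, and the two commutation/orthogonality identities from Theorem~\ref{theorem:1} are exactly strong enough to kill every unwanted cross term after the Kronecker expansion. The only mild subtlety is being careful about the order of factors in products such as $\mA_d \mB_d$ versus $\mB_d \mA_d$, but since both are zero by hypothesis, no issue arises. This completes the induction and establishes the theorem for DFT matrices of arbitrary finite dimension.
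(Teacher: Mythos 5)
Your proposal is correct and follows essentially the same route the paper takes: use the Kronecker-product factorization of the multi-dimensional DFT matrix and induct on the number of dimensions, with Theorem~\ref{theorem:1} as the base case. The paper writes out the $d=3$ decomposition of $\mA_{3D}$, $\mB_{3D}$ explicitly and then simply asserts that ``employing mathematical induction'' extends the 1D result; your write-up supplies exactly the inductive step that the paper leaves implicit, and the mixed-product computations and cancellations you give are all correct.
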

\section{Noise distribution analysis}
\label{appendix:noise_distribution_analysis}
\subsection{FreeInit}
\label{appendix:freeinit}
FreeInit~\citep{wu2023freeinit} uses conventional frequency filtering methods to manipulate noise, which is the key step in the framework. This step can be formulated as follows:
\begin{equation}
\label{eq:freeinit_raw}
    \bfz_T = \Re\left( \mathcal{F}_{3D}^{-1}\left(\mathcal{F}_{3D}\left(\bfz_{noise}\right)\odot\mathcal{M} + \mathcal{F}_{3D}\left(\eta\right)\odot(\bm{1}-\mathcal{M})\right) \right),
\end{equation}
where $\mathcal{F}_{3D}$ is the Fourier transform applied to both spatial and temporal dimensions. $\mathcal{M}$ is a spatial-temporal low-pass filter. $\bfz_{noise}$ is noisy latent derived from corrupting the clean latent with initial Gaussian noise to timestep $T$, while $\eta$ is another Gaussian noise. 
For analysis, we focus solely on the spatial and temporal dimensions, ignoring the batchsize and channel dimensions. Additionally, we flatten the latent $\bfz_T\in\R^{f\times h \times w}$ into a vector $\bfz \in \R^{fhw}$. The equation~\ref{eq:freeinit_raw} can be expressed in matrix form as follows:
\begin{equation}
    \bfz = \Re\left( \mF^{-1}\mLambda_x\mF\bfx + \mF^{-1}\mLambda_y\mF\bfy \right),
\end{equation}
where $\mF$ is the DFT matrix of transform $\mathcal{F}_{3D}$, $\bfx$ and $\bfy$ are random vectors corresponding to $\bfz_{noise}$ and $\eta$, and $\mLambda_x$ and
$\mLambda_y$ are diagonal matrices associated with low-pass filter $\mathcal{M}$ and high-pass filter $\bm{1}-\mathcal{M}$. Therefore it holds that $\mLambda_x+\mLambda_y=\mI$.
% Due to the property of $\mathcal{M}$, the diag 
This equation can be simplified to the following form using equation~\ref{eq:decomposition}:
\begin{equation}
    \bfz = \frac{1}{N}\left(\mA\mLambda_x\mA + \mB\mLambda_x\mB\right)\bfx 
      + \frac{1}{N}\left( \mA\mLambda_y\mA + \mB\mLambda_y\mB\right)\bfy,
\end{equation}
Under the Assumption~\ref{assumption:1}, $\bfx=\text{vec}(\bfz_{noise}) \sim\mathcal{N}\left(\mathbf{0},\mI\right)$, where $\bfx$ and $\bfy$ are independent. 
Since $\bfz$ is a linear combination of independent Gaussian random vectors, it follows that $\bfz$ is also Gaussian. To derive the distribution of $\bfz$, we only need to compute its expectation and covariance.
The expectation is straightforward and given by $\E[\bfz]=\mathbf{0}$. 
The covariance of $\bfz$ can be calculated as follows:
\begin{equation}
\label{eq:freeinit_cov1}
    \Cov\left(\bfz\right) 
    = \frac{1}{N^2}\left( \mA\mLambda_x\mA + \mB\mLambda_x\mB\right)^2 +
      \frac{1}{N^2}\left( \mA\mLambda_y\mA + \mB\mLambda_y\mB\right)^2.
\end{equation}
To simplify the expression, we denote $\mP = \frac{1}{N}\left( \mA\mLambda_x\mA+\mB\mLambda_x\mB\right)$. Then the term $\mA\mLambda_y\mA+\mB\mLambda_y\mB$ can be expressed using $\mP$:
\begin{equation}
\label{eq:freeinit_cov2}
\begin{split}
     \mA\mLambda_y\mA + 
      \mB\mLambda_y\mB & = \mA(\mI-\mLambda_x)\mA + 
      \mB(\mI-\mLambda_x)\mB \\
      & = \mA^2 + \mB^2 - \left( \mA\mLambda_x\mA+\mB\mLambda_x\mB\right) = N\mI - N\mP.
\end{split}
\end{equation}
The last equation follows from $ \mA^2+\mB^2 =N\mI$, as stated in Theorem~\ref{theorem:dft}. 
Combining Equation~(\ref{eq:freeinit_cov1}) and Equation~(\ref{eq:freeinit_cov2}), the covariance of $\bfz$ is given by:
\begin{equation}
    \Cov\left(\bfz\right) = \mP^2 + (\mI-\mP)^2.
\end{equation}
Consequently, we obtain the distribution of $\bfz$ as follows:
\begin{equation}
\label{eq:free_distribution}
    \bfz\sim\mathcal{N}\left(\mathbf{0}, \mP^2 + \left(\mI-\mP\right)^2\right).
\end{equation}
Due to the property of the low-pass filter $\mathcal{M}$ where each element lies between 0 to 1, both $\mLambda_x$ and $\mLambda_y$ are semi-definite diagonal matrices. Consequently, we can prove that both $\mP$ and $\mI-\mP$ are semi-positive definite matrices. The covariance structure resembles $a^2+(1-a)^2$, which is less than 1 for $a\in(0, 1)$. This indicates a difference between the distribution of $\bfz$ and the standard Gaussian distribution. We explore this further in Appendix~\ref{appendix:theoretical_analysis}.

\subsection{FreqPrior}
\label{appendix:freqprior}
The noise refinement stage of our method consists of three distinct steps, which are elaborated on in Section~\ref{subsec:noise_refinement}.  
To facilitate further analysis, we express these steps in matrix form.
% For simplicity, we can express these steps in the following matrix forms, which facilitate further analysis.
The first step, \textbf{noise preparation step}, can be represented as:
\begin{equation}
\label{eq:freq_x1}
    \bfx_1 = \frac{1}{\sqrt{1+\cos^2\theta}}\left(\cos\theta\cdot \bfx+\sin\theta\cdot\mathbf{\eta_1}\right),\quad
    \bfx_2 = \frac{1}{\sqrt{1+\cos^2\theta}}\left(\cos\theta\cdot \bfx+\sin\theta\cdot\mathbf{\eta}_2\right),
\end{equation}
where $\mathbf\eta_1, \mathbf\eta_2\sim\mathcal{N}\left(\mathbf{0}, \mI\right)$ and are independent. Under Assumption~\ref{assumption:1}, $\bfx\sim\mathcal{N}(\mathbf{0},\mI)$. Obviously, $\bfx, \mathbf\eta_1$, and $\mathbf\eta_2$ are independent.
Both $\bfx_1$ and $\bfx_2$ are linear combinations of independent of Gaussian random vectors. 
Their expectation can be computed directly: $\E[\bfx_1]=\mathbb{E}[\bfx_2]=\mathbf{0}$.
Next, we calculate the covariance of these variables. Specifically, the covariances are given by:
\begin{equation}
\label{eq:freq_x2}
    \Cov\left(\bfx_1\right) = \Cov\left(\bfx_2\right) = \frac{1}{1+\cos^2\theta}\mI,\quad
    \Cov(\bfx_1,\bfx_2) = \Cov(\bfx_2,\bfx_1) = \frac{\cos^2\theta}{1+\cos^2\theta}\mI.
\end{equation}
This implies that $\bfx_1$ and $\bfx_2$ are correlated, as they both share a component of $\bfx$ when $\cos\theta\ne0$.

The \textbf{noise processing} and \textbf{post-processing} steps can be expressed as follows:
\begin{equation}
    \bfz_1 = \mF^{-1}\mLambda_x\mF\bfx_1 + \mF^{-1}\mLambda_y\mF\bfy_1, \quad
    \bfz_2 = \mF^{-1}\mLambda_x\mF\bfx_2 + \mF^{-1}\mLambda_y\mF\bfy_2,
\end{equation}
\begin{equation}
\bfz = \frac{1}{\sqrt{2}}\left(\Re\left(\bfz_1\right)+\Im\left(\bfz_1\right)+\Re\left(\bfz_2\right)-\Im\left(\bfz_2\right)\right),
\end{equation}
where $\bfy_1, \bfy_2\sim\mathcal{N}\left(\mathbf{0}, \mI\right)$ are independent. 
Regarding the filters, $\mLambda_x$ and $\mLambda_y$ are diagonal matrices corresponding to the low-pass filter $\mathcal{M}$ and the high-pass filter $(\bm{1}-\mathcal{M})^{0.5}$.

The refined noise $\bfz$ can be expressed in a following form using equation~\ref{eq:decomposition}:
\begin{equation}
\label{eq:freq_z_1}
\begin{split}
\sqrt{2}N\cdot\bfz {} = {}& \phantom{\;\;\,\,\,}\left( \mA\mLambda_x\mA +\mB\mLambda_x\mB+\mA\mLambda_x\mB-\mB\mLambda_x\mA\right)\bfx_1 \\
                          & + \left( \mA\mLambda_y\mA +\mB\mLambda_y\mB+\mA\mLambda_y\mB-\mB\mLambda_y\mA\right)\bfy_1 \\
                          & + \left( \mA\mLambda_x\mA +\mB\mLambda_x\mB-\mA\mLambda_x\mB+\mB\mLambda_x\mA\right)\bfx_2 \\
                          & + \left( \mA\mLambda_y\mA +\mB\mLambda_y\mB-\mA\mLambda_y\mB+\mB\mLambda_y\mA\right)\bfy_2.
\end{split}
\end{equation}
From the mathematical form of this expression, it is evident that the matrices preceding these random vectors share similar structures.
To simplify this equation, we introduce the following notations:
\begin{equation}
\begin{split}
    \text{Let}:\quad
    \mC_x &= \mA\mLambda_x\mA +\mB\mLambda_x\mB,\quad \mD_x =\mA\mLambda_x\mB-\mB\mLambda_x\mA, \\
    \mC_y &= \mA\mLambda_y\mA +\mB\mLambda_y\mB,\quad \mD_y =\mA\mLambda_y\mB-\mB\mLambda_y\mA.
\end{split}
\end{equation}
Since $\mA$ and $\mB$ are real symmetric matrices, and $\mLambda_x$ and $\mLambda_y$ are diagonal matrices, it is straightforward to prove that $\mC_x$ and $\mC_y$ are symmetric matrices, while $\mD_x$ and $\mD_y$ are skew-symmetric matrices. 
Using these notations, Equation~(\ref{eq:freq_z_1}) can be simplified as follow:
\begin{equation}
    \sqrt{2}N\cdot\bfz = \left(\mC_x+\mD_x\right)\bfx_1+
    \left(\mC_y+\mD_y\right)\bfy_1+
    \left(\mC_x-\mD_x\right)\bfx_2+
    \left(\mC_y-\mD_y\right)\bfy_2.
\end{equation}
In the analysis of $\sqrt{2}N\cdot\bfz$ where $\bfz$ is a Gaussian-distributed vector, we need to calculate the expectation and covariance to determine its distribution.
The expectation is given by $\E[\bfz] = \mathbf{0}$. 
The covariance can be expressed as the sum of several covariance terms related to  $\bfx_1$, $\bfx_2$, $\bfy_1$ and $\bfy_2$. Specifically, the covariance of $\sqrt{2}N\cdot\bfz$ can be expressed as follows:
\begin{equation}
\begin{split}
    \Cov(\sqrt{2}N\cdot\bfz) = {}& {} \phantom{\,\;\;\,\,\,} \Cov\left(\left(\mC_x+\mD_x\right)\bfx_1\right)
    +\Cov\left(\left(\mC_x-\mD_x\right)\bfx_2\right)\\
    &+\Cov\left(\left(\mC_y+\mD_y\right)\bfy_1\right)
    +\Cov\left(\left(\mC_y-\mD_y\right)\bfy_2\right)\\
    &+\Cov\left(\left(\mC_x+\mD_x\right)\bfx_1,\left(\mC_x-\mD_x\right)\bfx_2\right)\\
    &+\Cov\left(\left(\mC_x-\mD_x\right)\bfx_2,\left(\mC_x+\mD_x\right)\bfx_1\right).
\end{split}
\end{equation}
The covariance of $\sqrt{2}N\cdot\bfz$ consists of 6 terms, with first four terms representing the covariance of each random vector. The last two terms are cross terms that arise due to the fact that $\bfx_1$ and $\bfx_2$ are not independent. 
By solving these terms, We can derive the covariance of $\bfz$.

First, we focus on the covariance terms related to $\bfy_1$ and $\bfy_2$:
\begin{equation}
\label{eq:freq_cov_part1}
\begin{split}
    &\Cov\left(\left(\mC_y+\mD_y\right)\bfy_1\right)+\Cov\left(\left(\mC_y-\mD_y\right)\bfy_2\right)\\
    ={}&\left(\mC_y+\mD_y\right)\Cov\left(\bfy_1\right)\left(\mC_y+\mD_y\right)^\top + 
    \left(\mC_y-\mD_y\right)\Cov\left(\bfy_1\right)\left(\mC_y-\mD_y\right)^\top \\
    = {} &  \left(\mC_y+\mD_y\right)\left(\mC_y-\mD_y\right) + \left(\mC_y-\mD_y\right)\left(\mC_y+\mD_y\right)
    =2\left(\mC_y^2-\mD_y^2\right).
\end{split}
\end{equation}
Similarly, we can infer $\Cov\left(\left(\mC_x+\mD_x\right)\bfx_1\right)
    +\Cov\left(\left(\mC_x-\mD_x\right)\bfx_2\right)$ combined with Equation~(\ref{eq:freq_x2}):
% Covariance x part
\begin{equation}
\label{eq:freq_cov_part2}
    \Cov\left(\left(\mC_x+\mD_x\right)\bfx_1\right)
    +\Cov\left(\left(\mC_x-\mD_x\right)\bfx_2\right)
    = \frac{2}{1+\cos^2\theta}\left(\mC_x^2-\mD_x^2\right).
\end{equation}
Having computed the first four terms, we now turn our attention to the last two cross terms. With Equation~(\ref{eq:freq_x2}), we have:
\begin{equation}
\label{eq:freq_cov_part3}
\begin{split}
    &\Cov\left(\left(\mC_x+\mD_x\right)\bfx_1,\left(\mC_x-\mD_x\right)\bfx_2\right)
    +\Cov\left(\left(\mC_x-\mD_x\right)\bfx_2,\left(\mC_x+\mD_x\right)\bfx_1\right) \\
    =&{} \left(\mC_x+\mD_x\right)\Cov(\bfx_1,\bfx_2)\left(\mC_x-\mD_x\right)^\top 
        +\left(\mC_x-\mD_x\right)\Cov(\bfx_2,\bfx_1)\left(\mC_x+\mD_x\right)^\top \\
    =&{}\frac{\cos^2\theta}{1+\cos^2\theta}\left(\mC_x+\mD_x\right)^2+\frac{\cos^2\theta}{1+\cos^2\theta}\left(\mC_x-\mD_x\right)^2 =\frac{2\cos^2\theta}{1+\cos^2\theta}\left(\mC_x^2+\mD_x^2\right).
\end{split}
\end{equation}
Substituting the expression of the covariance related to $\bfx_1$, $\bfx_2$, $\bfy_1$ and $\bfy_2$ with Equations~(\ref{eq:freq_cov_part1},~\ref{eq:freq_cov_part2},~\ref{eq:freq_cov_part3}), we can express the covariance of $\sqrt{2}N\cdot\bfz$ in the following form:
\begin{equation}
\label{eq:freq_50}
    \Cov\left(\sqrt{2}N\cdot\bfz\right) = 2\left(\mC_x^2-\mD_x^2+\mC_y^2-\mD_y^2\right)+ \frac{4\cos^2\theta}{1+\cos^2\theta}\mD_x^2.
\end{equation}
To further simplify this equation, we need to explore the properties of $\mC_x$, $\mC_y$, $\mD_x$ and $\mD_y$. 
From Theorem~\ref{theorem:dft}, which establish $\mA\mB=\mB\mA=\mathbf{0}$ and $\mA^2+\mB^2=N\mI$. We can compute the squares of matrices $\mC_x$ and $\mD_x$ as follows:
\begin{equation}
\label{eq:freq_51}
\begin{split}
    \mC_x^2 & = \mA\mLambda_x\mA^2\mLambda_x\mA + \mB\mLambda_x\mB^2\mLambda_x\mB, \\
    -\mD_x^2 & = \mA\mLambda_x\mB^2\mLambda_x\mA + \mB\mLambda_x\mA^2\mLambda_x\mB.
\end{split}
\end{equation}
Notice that the squares of $\mC_x$ and $\mD_x$ share a similar form, differing only in the middle matrix: one is $\mA^2$ and the other is $\mB^2$. This observation inspires us to calculate $\mC_x^2-\mD_x^2$, especially since we have established $\mA^2+\mB^2=N\mI$. Therefore, we can express it as follows:
\begin{equation}
\begin{split}
    \mC_x^2-\mD_x^2&= \mA\mLambda_x\left(\mA^2+\mB^2\right)\mLambda_x\mA + \mB\mLambda_x\left(\mB^2+\mA^2\right)\mLambda_x\mB\\ 
    & = N \mA\mLambda_x^2\mA + N\mB\mLambda_x^2\mB, \\
\end{split}
\end{equation}
Since $\mC_y$ and $\mD_y$ follow the same pattern with only the subscript replaced, it also holds that:
\begin{equation}
    \mC_y^2-\mD_y^2 = N \mA\mLambda_y^2\mA + N\mB\mLambda_y^2\mB. 
\end{equation}
Make use of $\mLambda_y=\left(\mI-\mLambda_x^2\right)^{\frac{1}{2}}$, we can conclude:
\begin{equation}
\label{eq:freq_54}
    \mC_x^2-\mD_x^2 + \mC_y^2-\mD_y^2 = N \mA\left(\mLambda_x^2+\mLambda_y^2\right)\mA + N\mB\left(\mLambda_x^2+\mLambda_y^2\right)\mB =N\mA^2+N\mB^2=N^2\mI.
\end{equation}
Substituting with Equations~(\ref{eq:freq_51}) and (\ref{eq:freq_54}), we can simplifies Equation~(\ref{eq:freq_50}) to express the covariance of $\sqrt{2}N\cdot\bfz$ as follows:
\begin{equation}
\label{eq:freq_55}
    \Cov\left(\sqrt{2}N\cdot\bfz\right) = 
    2N^2\mI
    -\frac{4\cos^2\theta}{1+\cos^2\theta}
    \left( \mA\mLambda_x\mB^2\mLambda_x\mA + \mB\mLambda_x\mA^2\mLambda_x\mB
    \right),
\end{equation}
Inspired by the form of $\mA\mLambda_x\mB^2\mLambda_x\mA$ and $\mB\mLambda_x\mA^2\mLambda_x\mB$ which are the matrix multiplication of $\mA\mLambda_x\mB$ and $\mB\mLambda_x\mA$. 
We creatively construct a new matrix $\mQ = \frac{1}{N}\left(\mA\mLambda_x\mB+\mB\mLambda_x\mA\right)$. It is easy to prove $\mQ$ is a symmetric matrix. The square of $\mQ$ is as follows:
\begin{equation}
\label{eq:Q_square}
    \mQ^2=\frac{1}{N^2}\left( \mA\mLambda_x\mB^2\mLambda_x\mA + \mB\mLambda_x\mA^2\mLambda_x\mB
    \right).
\end{equation}
By combining Equation~(\ref{eq:freq_55}) and Equation~(\ref{eq:Q_square}) and eliminating the constant $\sqrt{2}N$ from both sides of the equation, we can calculate the covariance of $\bfz$:
\begin{equation}
    \Cov\left(\bfz\right)=\mI - \frac{2\cos^2\theta}{1+\cos^2\theta}\mQ^2.
\end{equation}
Finally, we derive the distribution of $\bfz$ as follows:
\begin{equation}
\label{eq:freq_distribution}
    \bfz\sim\mathcal{N}\left(\mathbf{0}, \mI - \frac{2\cos^2\theta}{1+\cos^2\theta}\mQ^2\right).
\end{equation}
It is clear that the covariance of our refined noise is ``smaller'' than $\mI$. However, as $\mA\mB=\mB\mA=\mathbf{0}$ and the diagonal elements of $\Lambda_x$ ranges from 0 to 1, it gives the intuition that $\mQ$ is close to $\mathbf{0}$. 
We make further analysis in Appendix~\ref{appendix:covariance}.

\section{Covariance error analysis}
\label{appendix:theoretical_analysis}
\begin{theorem}
\label{theorem:4}
    Given two semi-positive definite matrices $\mC$ and $\mD$ satisfying $\mC\succeq\mD\succeq\mathbf{0}$, then $||\mC||_F \ge ||\mD||_F$ where $||\cdot||_F$ is Frobenius Norm.
\end{theorem}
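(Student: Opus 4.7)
The plan is to convert the Frobenius-norm inequality into a trace inequality and then exploit the positive semi-definite hypothesis. Since $\mC$ and $\mD$ are (symmetric) positive semi-definite, we have $\|\mC\|_F^2 = \Tr(\mC^2)$ and $\|\mD\|_F^2 = \Tr(\mD^2)$, so it suffices to establish $\Tr(\mC^2) \ge \Tr(\mD^2)$.

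First I would introduce the difference matrix $\mE := \mC - \mD$, which is PSD by the hypothesis $\mC \succeq \mD$. Expanding $\mC^2 = (\mD + \mE)^2 = \mD^2 + \mD\mE + \mE\mD + \mE^2$ and taking the trace (using cyclicity to combine the cross terms) gives
\begin{equation}
\Tr(\mC^2) - \Tr(\mD^2) = 2\,\Tr(\mD\mE) + \Tr(\mE^2).
\end{equation}
The term $\Tr(\mE^2) = \|\mE\|_F^2$ is clearly non-negative.

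The key step is showing $\Tr(\mD\mE) \ge 0$ for two PSD matrices. I would argue this via the symmetric square root: since $\mD \succeq \mathbf{0}$, it has a PSD square root $\mD^{1/2}$, and by the cyclic property
\begin{equation}
\Tr(\mD\mE) = \Tr(\mD^{1/2} \mE \mD^{1/2}).
\end{equation}
The matrix $\mD^{1/2} \mE \mD^{1/2}$ is congruent to the PSD matrix $\mE$, hence itself PSD, and its trace (the sum of its non-negative eigenvalues) is non-negative. Combining the two non-negative contributions yields $\Tr(\mC^2) \ge \Tr(\mD^2)$, which after taking square roots gives $\|\mC\|_F \ge \|\mD\|_F$.

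I do not anticipate a serious obstacle here; the only subtlety is the product-of-PSD trace fact, which is handled cleanly by the symmetric-square-root trick above. An alternative that avoids square roots would be to note that $\mD$ and $\mE$ can be simultaneously expressed through their eigen-decompositions and use $\Tr(\mD\mE) = \sum_{i,j} \lambda_i(\mD)\,\mu_j(\mE)\,|\langle u_i, v_j\rangle|^2 \ge 0$, but the square-root argument is shorter and more self-contained.
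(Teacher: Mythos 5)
Your proof is correct and follows essentially the same route as the paper's: both reduce the claim to showing $\Tr\bigl((\mC-\mD)\mD\bigr)\ge 0$ (your exact identity $\Tr(\mC^2)-\Tr(\mD^2)=2\Tr(\mD\mE)+\|\mE\|_F^2$ is just the paper's expansion of $\|\mC-\mD\|_F^2\ge 0$ rearranged), and both establish that trace inequality by a congruence argument that factors $\mD$. The only cosmetic difference is that you use the symmetric square root $\mD^{1/2}$ where the paper uses a Cholesky factor $\mL$; these play identical roles.
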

\begin{proof}
Since $||\mC-\mD||_F^2\ge 0$, then expanding it yields:
\begin{equation}
    ||\mC||_F^2 + ||\mD||_F^2\ge \mathrm{tr}\left(\mC^\top\mD+\mD^\top\mC\right) = 2\mathrm{tr}\left(\mC\mD\right).
\end{equation}
The last equation holds because the $\mC$ and $\mD$ are symmetric matrices and $\mathrm{tr}(\cdot)$ is invariant under circular shifts.
Then we can conclude:
\begin{equation}
    ||\mC||_F^2 - ||\mD||_F^2  
    \ge  2\mathrm{tr}\left(\mC\mD\right) - 2||\mD||_F^2   
    = 2\mathrm{tr}\left(\left(\mC-\mD\right)\mD\right).  
\end{equation}
Using Cholesky decomposition, for semi-definite matrix $\mathbf D$, there exists matrix $\mL$ such that $\mD=\mL\mL^\top$. Then we can derive:
\begin{equation}
    \mathrm{tr}\left(\left(\mC-\mD\right)\mD\right)
    = \mathrm{tr}\left(\left(\mC-\mD\right)\mL\mL^\top\right) 
    = \mathrm{tr}\left(\mL^\top\left(\mC-\mD\right)\mL\right). 
\end{equation}
From the given condition $\mC\succeq\mD$, thus $\mC-\mD\succeq\mathbf{0}$, thus $\mL^\top\left(\mC-\mD\right)\mL$ is semi-positive. Therefore the trace of this matrix will be non-negative. Therefore $||\mC||_F \ge ||\mD||_F$.
\end{proof}

\label{appendix:covariance}
From Equations (\ref{eq:free_distribution}) and (\ref{eq:freq_distribution}), we know the covariance of refined noise for each method:
\begin{equation}
    \mathbf{\Sigma}_{FreeInit} = \mP^2 + (\mI-\mP)^2,\quad \mathbf{\Sigma}_{FreqPrior} = \mI - \frac{2\cos^2\theta}{1+\cos^2\theta}\mQ^2.
\end{equation}
Consider the same settings, including that the low-pass filters are identical, meaning $\mLambda_x$ is fixed.
Since the filter $\mathbf\Lambda_x$ is diagonal with its diagonal elements in the range $[0, 1]$, we have $\mathbf{0}\preceq\mLambda_x\preceq\mI$. 
Consequently, we obtain the following inequality for matrix $\mP$:
\begin{equation}
    \mathbf{0}\preceq \mP = \frac{1}{N}\left( \mA\mLambda_x\mA+\mB\mLambda_x\mB\right) \preceq \frac{1}{N}\left( \mA^2+\mB^2\right)=\mI.
\end{equation}
Now consider the difference between $\mathbf{\Sigma}_{FreeInit}$ and $\mI$:
\begin{equation}
\label{eq:cov_1}
    \mI-\mathbf{\Sigma}_{FreeInit} = 2\left(\mP - \mP^2\right)\succeq \mathbf{0}.
\end{equation}
This inequality holds because $\mathbf{0}\preceq \mP \preceq \mI$, which implies $\mP^2\preceq \mP$. This demonstrates that $\Sigma_{FreeInit}$ is indeed ``smaller'' than $\mI$.
To conduct a further analysis of $\Sigma_{FreeInit}$ and $\Sigma_{freqinit}$, we first establish the relationship between $\mP$ and $\mQ$.
Noticing that $\mP$ and $\mQ$ have similar forms,
we can derive the following results by leveraging these specific forms:
\begin{equation}
\label{eq:cov_2}
\begin{split}
    \mP^2 + \mQ^2 &= \frac{1}{N^2}\left(
    \left(\mA\mLambda_x\mA^2\mLambda_x\mA + \mB\mLambda_x\mB^2\mLambda_x\mB\right)+
    \left(\mA\mLambda_x\mB^2\mLambda_x\mA + \mB\mLambda_x\mA^2\mLambda_x\mB\right)\right) \\
    &=\frac{1}{N}\left(\mA\mLambda_x^2\mA+\mB\mLambda_x^2\mB\right) 
    \preceq 
    \frac{1}{N}\left(\mA\mLambda_x\mA+\mB\mLambda_x\mB\right) = \mP.
\end{split}
\end{equation}
Combining Equations (\ref{eq:cov_1}) and (\ref{eq:cov_2}), we obtain:
\begin{equation}
\label{eq:covariance_error_two}
    \mI-\mathbf{\Sigma}_{FreqPrior} = \frac{2\cos^2\theta}{1+\cos^2\theta}\mQ^2 \preceq \frac{2\cos^2\theta}{1+\cos^2\theta}\left(\mP - \mP^2\right) = \frac{\cos^2\theta}{1+\cos^2\theta}\left(\mI-\mathbf{\Sigma}_{FreeInit}\right).
\end{equation}
Then we can analyze the covariance errors (as defined in Definition~\ref{definition:1}) by applying Theorem~\ref{theorem:4}:
\begin{equation}
    ||\mI-\mathbf{\Sigma}_{FreqPrior}||_F 
    \le \frac{\cos^2\theta}{1+\cos^2\theta}||\mI-\mathbf{\Sigma}_{FreeInit}||_F. 
\end{equation}
In practice, for common continuous low-pass filters, such as Butterworth filters and Gaussian filters, the corresponding function values monotonically decrease as the frequency increases. 
Given that $\mA\mB=\mB\mA$ and $\mQ=\frac{1}{N}(\mA\mLambda\mB+\mB\mLambda\mA)$, it follows that $\mQ^2$ is intuitively close to a zero matrix, making the covariance error nearly zero. This is further corroborated by our numerical experiments.

\section{Experimental details}
\label{appendix:setting}
Three open-sourced text-to-video models are used as the base models for evaluation: They are AnimateDiff~\citep{guo2023animatediff}, ModelScope~\citep{wang2023modelscope,VideoFusion}, and VideoCrafter~\citep{chen2023videocrafter1}.
\begin{itemize}
    \item For AnimateDiff, we use mm-sd-v15\_v2 motion module along with realisticVisionV20\_v20 dreambooth LoRA~\footnote{https://huggingface.co/ckpt/realistic-vision-v20/blob/main/realisticVisionV20\_v20.safetensors}, sampling 16 frames of at a resolution of $512\times512$ at 8 FPS with a guidance scale of 7.5.
    \item For ModelScope, we utilize the modelscope-damo-text-to-video-synthesis version, sampling 16 frames at a resolution of $256 \times 256$ at 8 FPS, with a guidance scale of 9.
    \item For VideoCrafter, we employ the VideoCrafter-v1 base text-to-video model, sampling 16 frames at a resolution of  $320 \times 320$ at 10 FPS, with a guidance scale of 12.
\end{itemize}

\section{Evaluation metrics}
\label{appendix:vbench}
% vbench 
We employ VBench~\citep{huang2023vbench} for evaluation, a comprehensive benchmark designed with tailored prompts and evaluation dimensions specifically aimed at assessing video generation performance.
A key feature of VBench is its incorporation of human preference annotations, ensuring alignment with human perception.
VBench uses a hierarchical and disentangled scoring system, breaking the overall \textbf{\textit{total score}} into two main components: \textbf{\textit{quality score}} and \textbf{\textit{semantic score}}.
It covers 16 evaluation dimensions, with 7 contributing to \textbf{\textit{quality score}} and 9 contributing to \textbf{\textit{semantic score}}. 
Each dimension is assessed using a specially designed approach, ensuring precise and meaningful evaluation of the generated videos. The assessments involve various off-the-shelf models~\citep{caron2021dino, ruiz2023dreambooth, radford2021clip, li2023amt, teed2020raft, laion2022aesthetic,ke2021MUSIQ,wu2022GRiT, li2023umt,huang2023t2i-compbench,huang2024tagtext, wang2024internvid},  and the score for each dimension is normalized on a 0 to 100 scale, based on empirical minimum and maximum values.
\begin{itemize}
    \item \textbf{\textit{Quality score}} is calculated as the weighted average of seven dimensions: {\it subject consistency}, {\it background consistency}, {\it temporal flickering}, {\it motion smoothness}, {\it dynamic degree}, {\it aesthetic quality}, and {\it imaging quality}. The weight for {\it aesthetic quality} is set to 2, while the other dimensions carry a weight of 1. 
    % The rationale behind this is that there are several dimensions related to temporal consistency, while \textit{dynamic degree} is the only dimension that specifically measures motion dynamics.
    \item \textbf{\textit{semantic score}} is calculated as the weighted average of nine dimensions: {\it object class}, {\it multiple objects}, {\it human action}, {\it color}, {\it spatial relationship}, {\it scene},{\it appearance style}, {\it temporal style}, and {\it overall consistency}, with each dimension equally weighted 1.
\end{itemize}
 
After calculating \textbf{\textit{quality score}} and \textbf{\textit{semantic score}}, \textbf{\textit{total score}} is calculated as follows:
\begin{equation}
    \mathrm{Total} = \frac{w_q}{w_q+w_s}\mathrm{Quality}+\frac{w_s}{w_q+w_s}\mathrm{Semantic},
\end{equation}
where $w_q$ and $w_s$ are $4$ and $1$ respectively by default.

\section{Visualization results}
\paragraph{More qualitative results.} 
Additional qualitative results are presented in Figure~\ref{fig:appendix_vis}. 
The videos generated using our noise prior exhibit superior video quality, in terms of imaging details, aesthetic aspects, and semantic coherence.
\paragraph{Visualization of the influence of timestep \boldmath$t$.} 
As illustrated in Figure~\ref{fig:ablation_timestep},  the first three rows of frames, which correspond to different timesteps $t$, are almost the same.
In the fourth case, there are some differences in the representation of the grape stem, highlighted by a red box. The stem is absent at the timestep of 321. 
At the timestep of 321, the stem is missing. 
The final case demonstrates more notable differences; as the timestep $t$ increases, the ice cream appears to melt, and the changes are observable on the table.
The visualizations suggest two key points: first, in most instances, the timestep 
$t$ has minimal impact on the overall generation results; second, although the content and layout of the video frames remain largely unchanged, the timestep can indeed influence the finer imaging details.
In general, the differences are quite minor, which means we can save much time by diffusing the latent at intermediate timestep during the noise refinement stage without compromising the quality of generation results.

\begin{figure}[t]
    \centering
    \includegraphics[width=1.0\linewidth]{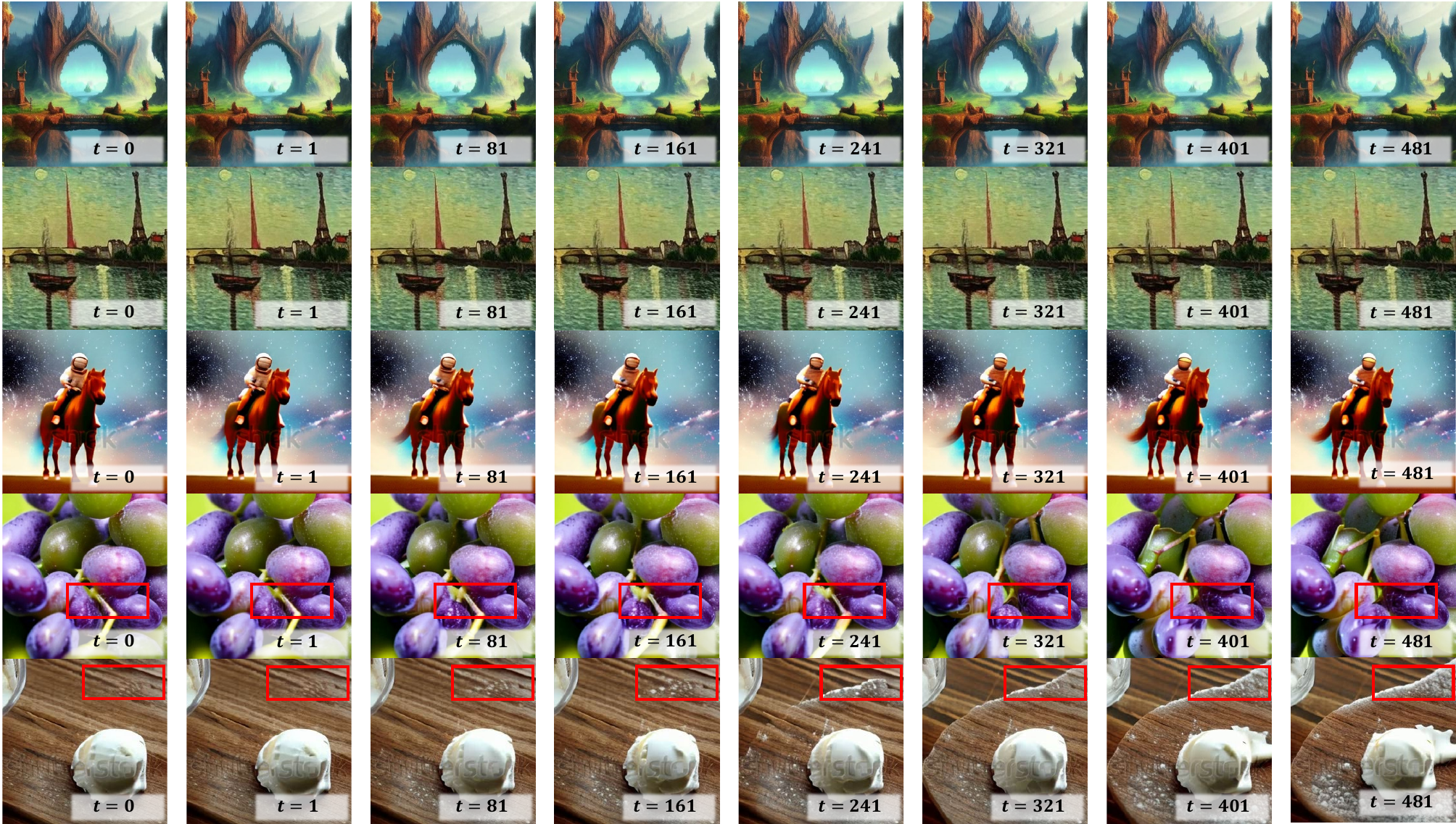}
    \caption{ \textbf{Visualization results of the influence of timestep.} We present five cases where other settings are fixed to isolate the effects of varying timestep $t$. Overall, timestep $t$ has minimal impact on the generation outcomes. However, it does exert some influences on the imaging details occasionally. For the fourth and fifth cases, the red boxes highlight the differences.}
    \label{fig:ablation_timestep}
\end{figure}

\section{Limitations}
While our method enhances consistency and smoothness in videos generated from Gaussian noise, it can occasionally result in unnatural smoothness that does not align with the laws of physics. 
Additionally, although our approach improves overall performance, it may alter the content layout of video frames compared to Gaussian noise. For real images, low-frequency signals typically dictate layouts; however, this is not always true for the noise prior in diffusion models. 
Our method refines the Gaussian noise prior using a novel frequency filtering technique, which usually preserves the structural similarity to the original Gaussian noise. Nonetheless, in some cases, the generated videos can differ significantly. During filtering, high-frequency components from other Gaussian noise may subtly change the structure of the Gaussian noise prior, resulting in variations in the content and layouts of the generated videos.

\section{Broader impacts}
\label{appendix:impacts}
This work aims to propose a novel prior by refining initial Gaussian noise to enhance the quality of video generation. 
Text-to-video diffusion models hold the potential to revolutionize media creation and usage. While these models offer vast creative opportunities, it is crucial to address the risks of misinformation and harmful content. 
Before deploying these models in practice, it is essential to thoroughly investigate their design, intended applications, safety aspects, associated risks, and potential biases.

\begin{figure}[t]
    \centering
    \includegraphics[width=1.0\linewidth]{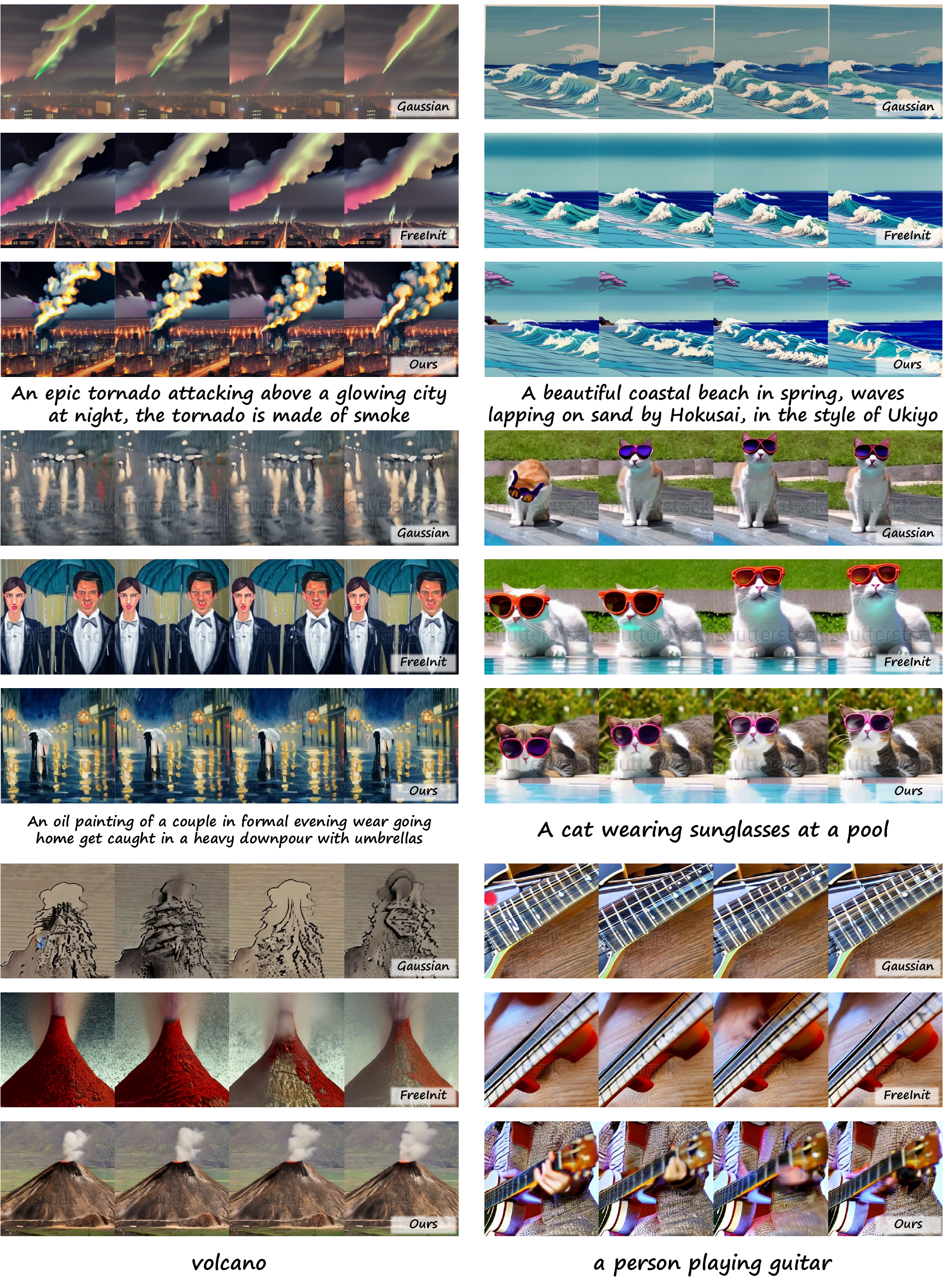}
    \caption{{\bf More qualitative results.}}
    \label{fig:appendix_vis}
\end{figure}

\end{document}